%% file: main.tex
\documentclass[journal]{IEEEtran}

\usepackage[caption=false,font=footnotesize,labelfont=sf,textfont=sf]{subfig}

\usepackage{textcomp}
\usepackage{stfloats}
\usepackage{url}
\usepackage{verbatim}
\usepackage{graphicx}
\def\BibTeX{{\rm B\kern-.05em{\sc i\kern-.025em b}\kern-.08em
    T\kern-.1667em\lower.7ex\hbox{E}\kern-.125emX}}
\usepackage{balance}

\usepackage{cite}
\usepackage{amsmath,amssymb,amsfonts}
\usepackage{algorithm}
\usepackage{algorithmicx}
\usepackage{algpseudocode}
\usepackage{setspace}
\usepackage{bm}
\usepackage{graphicx}
\usepackage{textcomp}
\usepackage{xcolor}
\usepackage{tikz}
\usetikzlibrary{quantikz2}
\usepackage{float}

\usepackage[T1]{fontenc}

\usepackage[hidelinks]{hyperref}

\usepackage{makecell}
\usepackage{booktabs}
\usepackage{ulem} 
\usepackage{array} 

\usepackage{amsthm}

\usepackage{graphbox} 
\usepackage{adjustbox}
\usepackage{balance}
\usepackage{multirow}
\usepackage[capitalize]{cleveref}
\usepackage{footnote}
\usepackage{bm}
\usepackage{threeparttable}

\usepackage{lipsum}
\usepackage[table]{xcolor}
\usepackage{colortbl}
\usepackage{nicematrix}
\usepackage[usestackEOL]{stackengine}

\usepackage{footmisc}
\usepackage{enumitem}

\newtheorem{theo}{Theorem}
\newtheorem{lem}{Lemma}
\newtheorem{cor}{Corollary}
\theoremstyle{remark}

\theoremstyle{definition}
\newtheorem{defin}{Definition}

\newtheorem*{remark}{Remark}

\newtheorem*{prob}{Research Problem}
\newtheorem*{main}{Main idea}

\newcommand{\eqdef}{\stackrel{\triangle}{=}}

\begin{document}

\title{
Quantum Routing Beyond Pathfinding: \\ Multipartite Entanglement Complementation
{}
}

\author{Si-Yi Chen, Angela~Sara~Cacciapuoti 
 and Marcello~Caleffi
    \thanks{A preliminary conference version of this work is \cite{CheCacCal-25-QCE}.} 
    \thanks{The authors are with the \href{www.quantuminternet.it}{www.QuantumInternet.it} research group, University of Naples Federico II, Naples, 80125 Italy. \textit{Corresponding author: Angela Sara Cacciapuoti (e-mail: angelasara.cacciapuoti@unina.it).}}
    \thanks{\textcolor{black}{This work has been funded by the European Union under Horizon Europe ERC-CoG grant QNattyNet, n.101169850.} Views and opinions expressed are however those of the author(s) only and do not necessarily reflect those of the European Union or the European Research Council Executive Agency. Neither the European Union nor the granting authority can be held responsible for them.}
    }

\maketitle

\begin{abstract}
Conventional quantum routing operates under the entrenched assumption that pathfinding is a prerequisite for routing. 
This classical-inspired routing model imposes a restricting design option, which prevents scaling the quantumness to the network functioning. In this paper, we proposed a novel entanglement-driven routing framework that exploits multipartite entanglement complementation for enabling simultaneous 1-hop connectivity among all non-adjacent source-destination pairs. This changes the notion of ``remoteness'' in the \textcolor{black}{entanglement graph}, activated by entanglement. We extend this framework to inter-domain quantum networks and design a polynomial-time algorithm. Such an algorithm allows to select and parallelize multiple requests, bypassing NP-complete path discovery. Performance analysis shows the proposed routing strategy achieves up to $60\%$ hop reduction, 
with the algorithm enabling efficient parallelism and strong scalability in inter-domain quantum networks.
\end{abstract}

\begin{IEEEkeywords}
Entanglement, Quantum Networks, Quantum routing, Quantum Internet, ERC-QNattyNet.
\end{IEEEkeywords}

\section{Introduction}
\label{sec:1}

\IEEEPARstart{E}{nabling} end-to-end entanglement between remote network nodes is often regarded as a primary task in quantum network applications\cite{CalCac-26,CacIllCal-23, ConMalWin-24,LiNarAel-24,GuLiYu-24}. Research in this field\cite{AbaCubMai-25,ShiQia-20,AliChe-22,Cal-17,LiuLi-25,GyoImr-18-1,LiLiLiu-21,HerPomBeu-22,SatIshNag-16} has extensively drawn inspiration from classical routing principles, by discovering optimized Source-Destination (S-D) paths based on selected link metrics, such as hop-count or fidelity, and then extending entanglement along the individuated path via quantum repeaters. Nevertheless, this classical-inspired paradigm carries over inherent limitations, including excessive quantum memory demands for  \textcolor{black}{\textit{Routing-Qubit Footprints} (RQF, see Tab.~\ref{tab:01})}\cite{CalAmoFer-24, rfc9340}. Indeed, the entrenched assumption that path selection is a prerequisite imposes restrictive design constraints, particularly in scaling the quantumness to the network functioning \cite{CalCac-26}. Addressing these limitations requires moving beyond path-dependent paradigms through an entanglement-driven routing framework.

\begin{figure}[t]
    \centering

    \subfloat[Graph $G$]{%
        \includegraphics[width=0.485\linewidth]{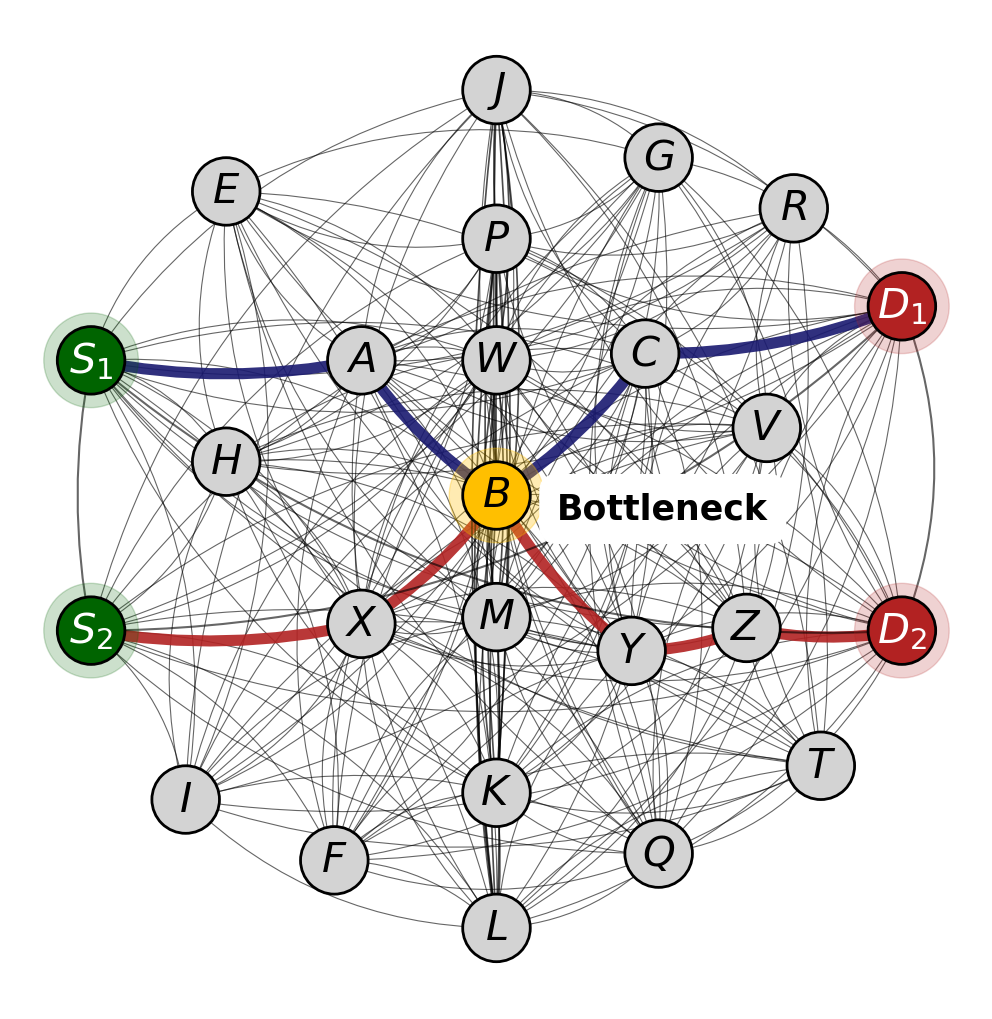}%
        \label{fig:01.a}
    }
    \hfill
    \subfloat[Complement Graph $\bar{G}$]{%
        \includegraphics[width=0.485\linewidth]{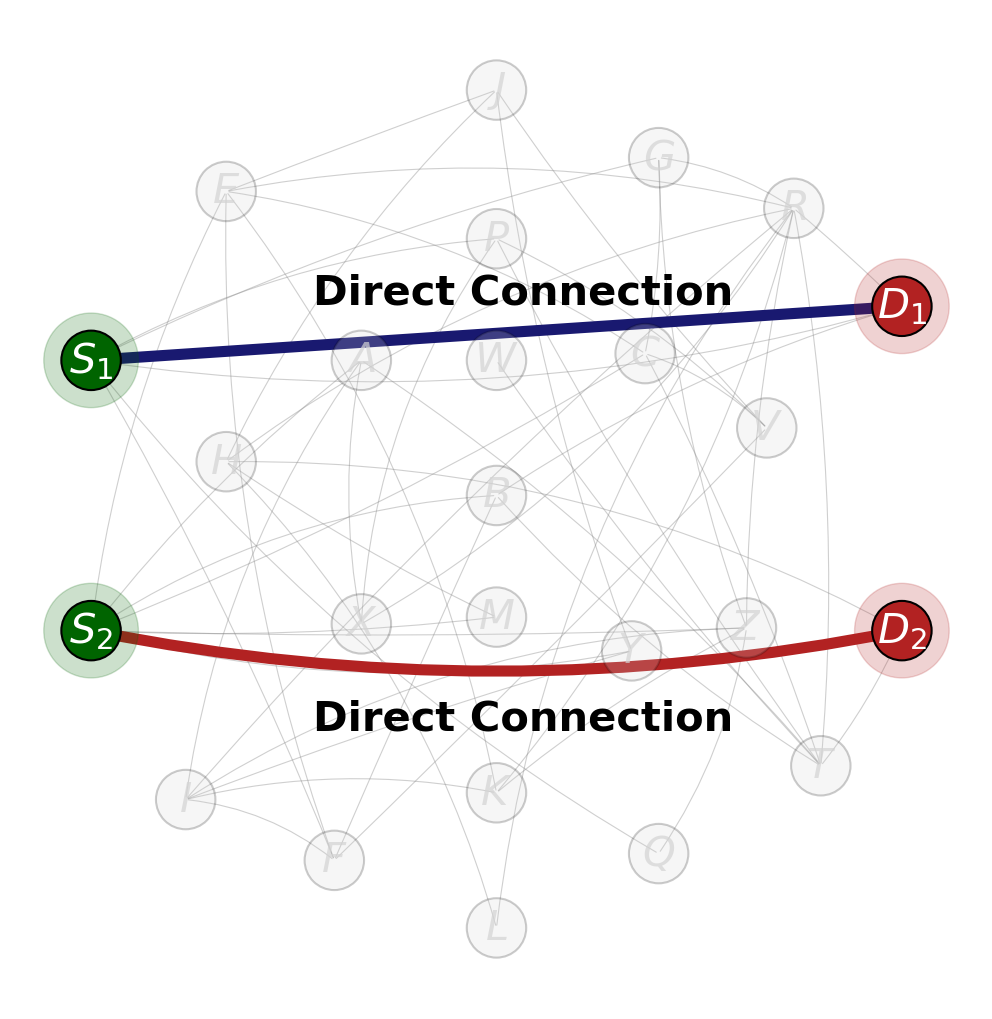}%
        \label{fig:01.b}
    }

    \caption{\textcolor{black}{Bottleneck Paths in  Graph $G$ vs Direct Parallel Connections in  Complement Graph $\bar{G}$.}}
    \label{fig:01}
    \vspace{0.5ex}
    \hrulefill
\end{figure}

\renewcommand{\arraystretch}{1.4}
\begin{table*}[t]
\centering
\caption{\textcolor{black}{Summary of Global Notation}}
\label{tab:02}

\fontsize{8pt}{8pt}\selectfont 
\begin{tabular}{|c|p{0.36\linewidth}|c|c|p{0.36\linewidth}|}
\hline
\hline
\multicolumn{1}{|c|}{\textbf{ \textcolor{black}{Acronyms} }} & \multicolumn{1}{c|}{\textbf{ \textcolor{black}{Full Meaning} }} & &
\multicolumn{1}{c|}{\textbf{ \textcolor{black}{Acronyms} }} & \multicolumn{1}{c|}{\textbf{ \textcolor{black}{Full Meaning} }}   
\\
\hline
    \textcolor{black}{ \textit{MEC} } & \textcolor{black}{ Multipartite Entanglement Complementation } &  &    \textcolor{black}{ \textit{BSM} } & \textcolor{black}{Bell-State Measurement}\\

    \textcolor{black}{ \textit{CQR} } & \textcolor{black}{ Conventional Quantum Routing } &  &    \textcolor{black}{ \textit{DP} } & \textcolor{black}{Dynamic Parallel Pairs algorithm}\\

    \textcolor{black}{ \textit{QNet} } & \textcolor{black}{ Quantum Network (e.g., QLAN) } &  &    \textcolor{black}{ \textit{RQF} } & \textcolor{black}{ Routing-Qubit Footprint}\\

    \textcolor{black}{ \textit{S-D} pair } & \textcolor{black}{ Source-Destination pair } &  &    \textcolor{black}{ \textit{ARQF} } & \textcolor{black}{Aggregate Routing-Qubit Footprint}\\

    \textcolor{black}{ \textit{NDP} } & \textcolor{black}{Node-Disjoint Paths problem} &  &    \textcolor{black}{ \textit{} } & \textcolor{black}{}\\
\hline
\hline
\multicolumn{1}{|c|}{\textbf{ \textcolor{black}{Symbols} }} & \multicolumn{1}{c|}{\textbf{ \textcolor{black}{Definitions} }} & &
\multicolumn{1}{c|}{\textbf{ \textcolor{black}{Symbols} }} & \multicolumn{1}{c|}{\textbf{ \textcolor{black}{Definitions} }}   
\\ 
\hline

    \textcolor{black}{$V_a$} & \textcolor{black}{Set of nodes in QNet $a$.} &  &  \textcolor{black}{$R$} & \textcolor{black}{A batch of remote EPR requests. Each request is a S-D pair.} \\

    \textcolor{black}{$v_{a.i}$} & \textcolor{black}{The $i$-th node in QNet $V_a$. Represents a quantum node with one qubit.} &  & \textcolor{black}{$\lambda$} & \textcolor{black}{Interval between request rounds.	Used in throughput derivation.} \\

    \textcolor{black}{$E_{a.i,b.j}$} & \textcolor{black}{Inter-link between $v_{a.i}$ with $v_{b.j}$ (Def.~\ref{def:x01}). Only exists across different QNets ($a \neq b$). } &  & \textcolor{black}{$\bar{h}$} &  \textcolor{black}{Average hop-count per request in CQR. Used in throughput and resource comparison.} \\

    \textcolor{black}{$E$} & \textcolor{black}{Set of established inter-links.} &  & \textcolor{black}{$\bar{r}$} & \textcolor{black}{Average number of parallel requests per cycle.} \\

    \textcolor{black}{$\bar{E}$} &  \textcolor{black}{Complement inter-link set (Def.~\ref{def:x04}). Represents all possible inter-links not present in original graph.} &  & \textcolor{black}{$\rho$}  & \textcolor{black}{Number of cycles to complete all requests. Depends on parallelism.}    \\

    \textcolor{black}{$C$} & \textcolor{black}{Control-node system, fully connected with control nodes.} &  &  \textcolor{black}{$t_i$} & \textcolor{black}{The arrival time of the $i$-th  batch of requests.}  \\

    \textcolor{black}{$c_a$} & \textcolor{black}{Control node associated with QNet $V_a$. Connected to all nodes in $V_a$.} &  & \textcolor{black}{$ft_i$} & \textcolor{black}{The completion time of the $i$-th batch of requests.}  \\

    \textcolor{black}{$k$} & \textcolor{black}{Number of QNets.} &  & \textcolor{black}{$p$}   & \textcolor{black}{Edge probability controlling the density of synthetic inter-QNet graphs.}  \\

    \textcolor{black}{$k'$}  & \textcolor{black}{Number of control nodes, defined as $k'= k+(k\mod2)$.} &  & \textcolor{black}{$\chi$} & \textcolor{black}{Numbers of intermediate nodes to complete all requests in CQR.} \\

    \textcolor{black}{${N}_a(v_i)$}  & \textcolor{black}{Set of nodes in other QNets adjacent to $v_i$.} &  & \textcolor{black}{$n^M, n^B$} & \textcolor{black}{Number of cycles can be executed within the interval by MEC / CQR.} \\
    
    \textcolor{black}{$\overline{N}_a(v_i)$}  & \textcolor{black}{Set of nodes in other QNets \textbf{not} adjacent to $v_{a.i}$. Complement of ${N}_a(v_i)$.} &  &  \textcolor{black}{$F^M, F^B$} & \textcolor{black}{Throughput of MEC / CQR. Number of successful EPR pairs per time shot.}  \\

    \textcolor{black}{$G_{n_1,\dots,n_k}$} & \textcolor{black}{Inter-QNet graph (Def.~\ref{def:x02}), collection of Inter-Links $E$ and all QNet nodes. }
    &  &  \textcolor{black}{$T_{P}^M, T_{R}^M$} & \textcolor{black}{Preparation / Routing time in MEC.  $T_{P}^M$: Controlled inter-QNet generation and distribution time.  $T_{R}^M$: Local Pauli-measurement performing time.} \\

    \textcolor{black}{$\bar{G}_{n_1,\dots,n_k}$} & \textcolor{black}{Complement Inter-QNet graph (Def.~\ref{def:x05}), collection of Complement inter-Links $\bar{E}$ and all QNet nodes. }
    &  &  \textcolor{black}{$T_{P}^B, T_{R}^B$} & \textcolor{black}{Preparation / Routing time in CQR. $T_{P}^B$: Time for EPR generation and distribution along the $\bar{h}$-hop path. $T_{R}^B$: Bell-State Measurement performing time.} \\

    \textcolor{black}{$CG_{n_1,\dots,n_k}$} & \textcolor{black}{Controlled inter-QNet graph (Def.~\ref{def:x03}). Includes control-node system and all QNets.} &  & \textcolor{black}{$Q_{R}^B$} & \textcolor{black}{The total number of entanglement-resource qubits required network-wide during the routing phase to serve a batch of S--D requests by CQR.}  \\

     &  &  & \textcolor{black}{$Q^{M,\mathrm{pro}}_{R}$,$Q^{M,\mathrm{ond}}_{R}$} & \textcolor{black}{The total number of entanglement-resource qubits required network-wide during the routing phase to serve a batch of S--D requests by proactive / on-demand MEC.}  \\

\hline
\hline
\end{tabular}
\end{table*}

Specifically, by sharing multipartite entanglement resources among the network nodes, it is possible to activate dynamic \textit{\textcolor{black}{entanglement graphs}}\footnote{\label{ft:02} An artificial link is pictorially visualized as an edge between two nodes connected in \textcolor{black}{the entanglement graph. In entanglement-based networks, we advocate for the precise use of the term “graph” to describe the dynamic arrangement of artificial links, and reserve “topology” for invariant structural properties (e.g., hierarchy, containment) -- a distinction rooted in~\cite{Day-07}.} 
\textcolor{black}{For relevant Entanglement-based network  technical jargon, see \cite{CheCacCal-26}.}} built upon the physical graph, which can be adaptively manipulated to establish end-to-end entanglement. This, in turn, allows to track  evolving communication needs \cite{PirDur-19,CheIllCac-25,HahPapEis-19,CheIllCac-24-QCE,WenJosSte-18,LiXueLi-23,MazCalCac-25,MorDur-24,PomHerBai-21,HumKalMor-18,BenHajVan-24,ManPat-23}.    

In this context, solving the quantum routing problem goes beyond simply finding ``optimal'' physical paths according to a selected routing metric. It involves manipulating the \textcolor{black}{entanglement graph} to establish end-to-end entanglement \cite{CalCac-26}. Thus, the routing task shifts from classical next-hop tracking to proactively managing the entanglement-activated overlay graph, ensuring end-to-end entanglement distribution \cite{CalCac-26} and enabling dynamic reconfiguration of the \textcolor{black}{entanglement graph} for source-destination interconnection.

However, this shift introduces new challenges. At scale, dynamically reconfigured artificial links\footref{ft:02} resemble the tangled wires of traditional headphones, where each new connection risks "knotting" the network, demanding complex resource coordination. Consequently, a fundamental rethinking of quantum routing is essential, by eliminating multi-node path complexity through direct (1-hop) artificial links between communicating nodes.
Remarkably, the concept of graph complement~\cite{BonMur-76} aligns precisely with the above need. In fact, it inverts the connectivity relationships among nodes, by making non-adjacent nodes in the original \textcolor{black}{entanglement graph} adjacent in the complement, as illustrated in Fig.~\ref{fig:01}. This, as better analyzed in Sec.\ref{sec:2.2}, reorients the focus from pathfinding to \textcolor{black}{entanglement graph} engineering. 
\textcolor{black}{Tab.~\ref{tab:02} summarizes the global notation used throughout this paper. For clarity, algorithm-specific notations are listed separately in Sec.~\ref{sec:4.2} (see Tab.~\ref{tab:notations}).}
 
Building on this idea, we introduce the approach of \textit{multipartite entanglement complementation} (MEC), in which the connectivity relationships, activated by multipartite entanglement among the network nodes, are fully inverted.

More into details, in this work, we propose a MEC strategy for \textit{Inter-domain networking}, 
which is able to simultaneously transform all source–destination pairs that are remote in the original \textcolor{black}{entanglement graph} into neighbors. Remarkably, this is achieved at minimal cost, requiring only a single \textcolor{black}{qubit} per node. As better clarified later, such a constraint, \textcolor{black}{namely a \textit{Routing-Qubit Footprint} (RQF) of one qubit per node}, fundamentally breaks Bell-state–based approaches. Differently, the proposed MEC strategy sustains 1-hop inter-domain entanglement across all non-adjacent S-D pairs,  even under this extreme resource limitation. 
Furthermore, we design a polynomial-time algorithm  that enables the simultaneous selection of multiple S-D pairs for end-to-end entanglement establishment. Our contributions are summarized as follows.
\begin{itemize}
    \item We propose a MEC strategy for inter-domain networking, aimed at establishing end-to-end entanglement across nodes belonging to different networks. We further provide a comparative analysis against conventional quantum routing approaches.
    \item We design a polynomial-time algorithm  for selecting all feasible S-D pair candidates that satisfy the parallel condition, by allowing autonomous pair selection while circumventing the NP-hard node-disjoint paths problem.
    \item We validate the effectiveness of the MEC strategy and the scalability of the proposed algorithm, through extensive simulations, supported by a detailed performance and implementation analysis.
\end{itemize}

The remaining part of the paper is organized as follows. Section~\ref{sec:2} introduces the MEC strategy and compares it with conventional quantum routing. Section~\ref{sec:3} presents preliminaries on inter-domain networks, outlines the research problem, and describes our main idea. Section~\ref{sec:4} applies MEC to the proposed Controlled Inter-QNet network and designs a dynamic parallel-pairs algorithm for autonomous source–destination selection. Section~\ref{sec:5} validates the proposal, by using both real-world and synthetic datasets. Finally, Section~\ref{sec:6} concludes the paper \textcolor{black}{and discusses key aspects related to the proposed approach}.

\section{Preliminaries}
\label{sec:2}

\subsection{Conventional Quantum Routing (CQR)}
\label{sec:2.1}

\begin{figure}[t]
    \centering    
    \includegraphics[width=0.5\textwidth]{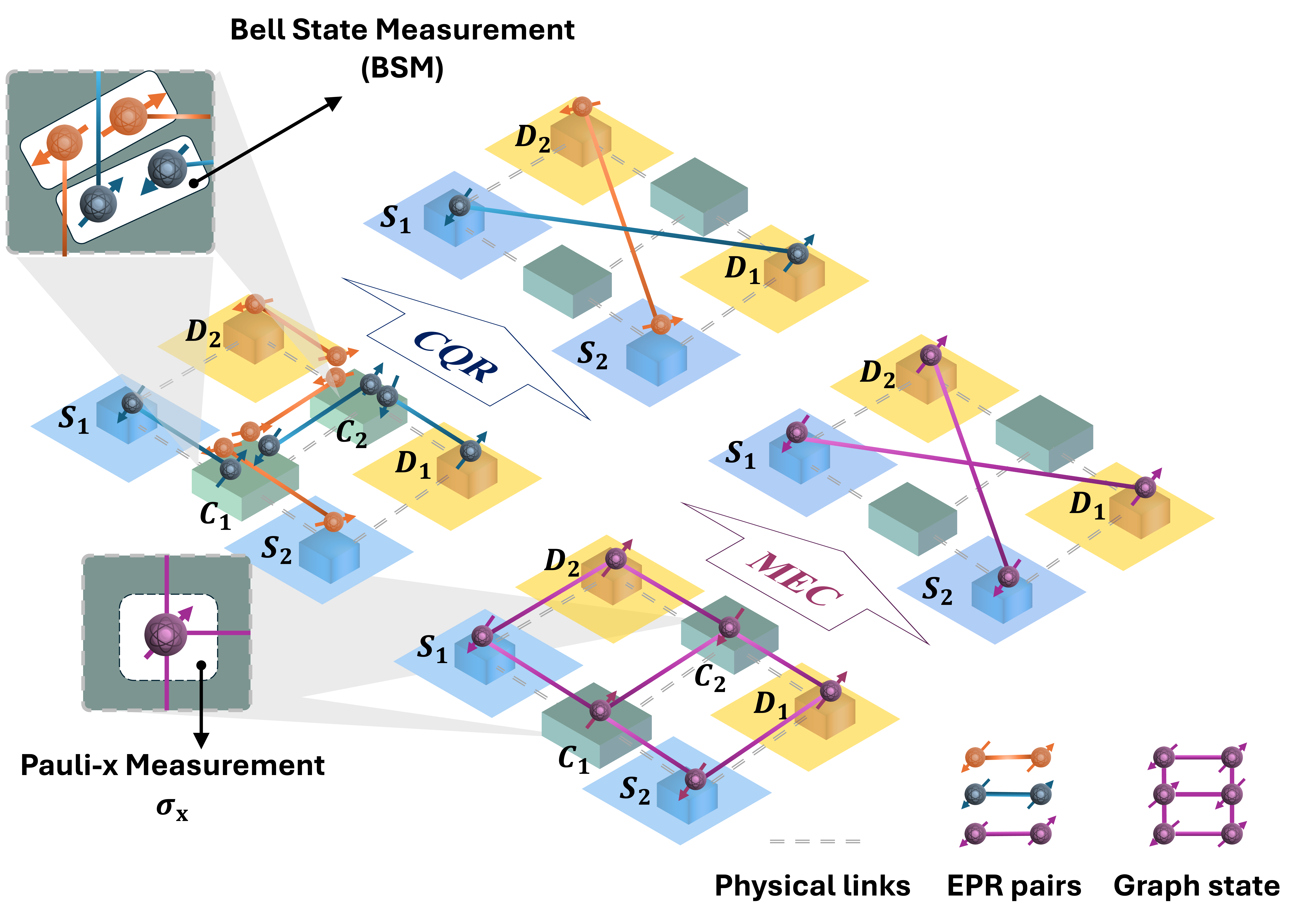}
    \caption{Conventional Quantum Routing (CQR) VS Multipartite Entanglement Complementation (MEC) in butterfly network. In CQR, multiple requests ($S_1, D_1$), ($S_2, D_2$) are managed via time-multiplexing, requiring intermediate nodes $C_1, C_2$ to allocate sufficient \textcolor{black}{qubits print} (4 qubits per node for 2 requests) to avoid bottlenecks. In contrast, \textcolor{black}{MEC leverages a 6-qubit graph state} to serve multiple requests simultaneously, with $C_1, C_2$ acting as control nodes to switch initial artificial links $(S_1, D_2), (S_2, D_1)$ into the complement counterparts $(S_1, D_1), (S_2, D_2)$, \textcolor{black}{achieving the routing-qubits footprint} to 1 qubit per node for 2 requests.} 
    \label{fig:02}    
    \vspace{0.5ex}
    \hrulefill
\end{figure}

\noindent As discussed in the Introduction, 
usually, in CQR, once entanglement resources (typically EPR pairs) are generated, intermediate nodes extend entanglement along multi-hop S–D paths by performing entanglement swapping through Bell-State Measurements (BSMs). A critical factor in this process is the  \textcolor{black}{\textit{Routing-Qubit Footprint} (RQF)} \cite{CalAmoFer-24,rfc9340}. Indeed, \textcolor{black}{releasing the constraint of RQF} significantly improves the rate of establishing end-to-end entanglement and, hence, it enhances the degree of communication parallelism in CQR. Conversely, when the \textcolor{black}{RQF} is limited, establishing end-to-end entanglement in parallel across multiple links becomes infeasible. As shown in Fig.~\ref{fig:02}, if intermediate nodes $C_1$, $C_2$ are restricted to only two qubits each, requests can only be processed sequentially. While routing optimization can partially mitigate this limitation, the problem ultimately reduces to the NP-complete node-disjoint paths (NDP) problem~\cite{Jen-95}, underscoring the inherent complexity of scalable CQR under strict resource constraints.

\subsection{Multipartite Entanglement Complementation (MEC)}
\label{sec:2.2}

\begin{table*}[!t]
\caption{Conventional Quantum Routing (CQR) VS Multipartite Entanglement Complementation (MEC)}
\label{tab:01}
\centering
\footnotesize
\setlength{\tabcolsep}{16pt} 
\renewcommand{\arraystretch}{1.75}
    \begin{threeparttable}[t]
    \begin{tabular}{c|c|c} 
        \toprule
        \toprule
        & \textbf{Conventional Quantum Routing} & \textbf{Multipartite Entanglement  Complementation}\\
        \cline{1-3}
        Key Operation & Entanglement Swapping & Graph complementation \\
        \cline{1-3}
        Traffic & Transit-dependent Paths  & Direct End-to-End Delivery \\
        \cline{1-3}
        Entanglement Resource & Typical: EPR pairs & Graph State \\
        \cline{1-3}
        \cellcolor{blue!5} S-D pair Selection & \cellcolor{blue!5} \textit{Pre-determined} & \cellcolor{blue!5} \textit{Dynamic} \\
        \cline{1-3}
        \cellcolor{blue!5} Delay & \cellcolor{blue!5} \textit{Multi-hop} & \cellcolor{blue!5} \textit{Single-hop} \\
        \cline{1-3}
        \cellcolor{blue!5}  Parallelism    & \cellcolor{blue!5} \tnote{a} \, S-D \textit{Paths} Parallel   & \cellcolor{blue!5} \tnote{b} \, S-D \textit{Pairs} Parallel 
        \\        
        \cline{1-3}
        \cellcolor{blue!5} \tnote{c} \, \textcolor{black}{Routing-Qubit Footprint (RQF)} 
        & \cellcolor{blue!5}  $>1$   & \cellcolor{blue!5} $=1$ (even for parallel multiple requests)  \\
        \bottomrule 
    \end{tabular}
    \begin{tablenotes}
        \item[a] S-D Paths Parallel: A set of S-D paths where no two paths share any node. This corresponds to the node-disjoint paths (NDP) problem, which seeks such path sets and is NP-complete~\cite{Jen-95}.
        \item[b] S-D Pairs Parallel: A set of S-D pairs whose endpoints satisfy  Disjoint Endpoint and Neighborhood Exclusion (see Lem.~\ref{lem:x02}). 
        \item[c] 
        \textcolor{black}{Routing-Qubit Footprint (RQF): The minimum number of entanglement-resource qubits per node that must be simultaneously available as inputs to routing operations during the routing stage to serve a routing instance. For multiple concurrent S--D requests, RQF denotes the per-node qubit footprint required to process them in parallel under the considered routing model. For MEC, ${\rm RQF}=1$ per node is sufficient for any set of S--D requests satisfying the parallel-pair condition (see Lem.~\ref{lem:x02}).}.
    \end{tablenotes}
    \end{threeparttable}
\end{table*}

\noindent As aforementioned, we exploit multipartite states, precisely graph states, to establish end-to-end entanglement across nodes belonging to different networks, thereby reducing the routing problem to a matter of graph manipulation~\cite{CheIllCac-25,PirDur-19,WenJosSte-18,CheIllCac-24-QCE,LiXueLi-23,MazCalCac-25,MorDur-24,PomHerBai-21,HumKalMor-18,BenHajVan-24,HahPapEis-19,ManPat-23}. The tool, we exploit to build the proposed Multipartite Entanglement Complementation (MEC) strategy\footnote{\label{ft:distribution}In our strategy, we assume that the entanglement resources are proactively generated and distributed in advance, with pathfinding required during the initial distribution stage, by relying on the physical network graph. \textcolor{black}{We acknowledge that generating and distributing multipartite entanglement generally entails overhead, as discussed in Section~\ref{sec:6}.}}, is graph complementation, which allows  to complement the \textcolor{black}{entanglement graph} associated with a multipartite graph state via a controlled procedure involving Pauli-$x$ measurements. To illustrate this correspondence, we present a schematic diagram comparing operations in the graph and graph state domains in Fig.~\ref{fig:03}. 

Differently from conventional approaches that primarily address pairwise entanglement, the MEC orients the focus toward the ``global properties'' of multiparty entanglement. As indicated in Fig.~\ref{fig:03}, MEC operationally admits two complementary versions: the original state $\ket{G}$, which encodes direct connectivity in the \textcolor{black}{entanglement graph}, and its complement, $\ket{\bar{G}}$, where artificial links directly connect nodes that are remote in the original state $\ket{G}$. A control-node system can dynamically switch between these two versions (see Sec.~\ref{sec:4.1}), by ensuring that for any S–D pair, either $\ket{G}$ or $\ket{\bar{G}}$ provides a direct (1-hop) artificial link. Thus, the traditional distinction between neighboring and remote node pairs\footnote{\label{ft:remote}Throughout this paper, the notions of ``\textit{remoteness}'' and its counterpart ``\textit{neighbor / adjacency}'' refer not to the physical proximity of network nodes, but to ``\textit{entangled proximity}'', that is, proximity within the \textcolor{black}{entanglement graph} associated with the graph $G$ \cite{CheIllCac-25}.} is eliminated, by enabling single-hop connectivity.

It is worthwhile to note that prior works exploited local complementation to modify individual or partial node neighborhoods, as in \cite{CheIllCac-25,MazCalCac-25,PirDur-19,HahPapEis-19,CheIllCac-24-QCE,AdcMorDah-20}. However, these contributions remain limited to local optimizations. In contrast, the proposed MEC  constructs the entire complementary graph, by elevating  complementation from a local optimization tool to a global-network reconfiguration mechanism. 

We summarize the key differences between  MEC and CQR strategies in Table~\ref{tab:01}. As, it will become clearer in the following sections, the proposed MEC strategy \textcolor{black}{achieves the RQF} to just one per node.  
This efficiency arises from leveraging a shared graph-state resource, which enables massive parallelism in request processing via complementation. Crucially, this parallelism operates at the level of S-D pairs rather than node-disjoint paths, thereby avoiding the contention issues inherent to CQR. For instance, in a butterfly network (see Fig.~\ref{fig:02}), a canonical inter-domain graph, CQR requires up to 4 qubits per node to support 2 parallel requests, whereas MEC achieves the same parallelism with only 1 qubit per node\footnote{For a detailed discussion of the crucial role played by \textcolor{black}{Routing-Qubit Footprint} and the importance of limiting their number to reduce complexity, the reader is referred to \cite{CacPelIll-26, CalAmoFer-22, rfc9340}.}. Furthermore, unlike CQR, MEC inherently supports dynamic S-D pair selection at the same routing-qubit footprint cost. In MEC, the graph state is prepared as a general-purpose resource, independent of any specific source-destination pair. Upon request arrival, complementation  dynamically configures this resource to establish direct links for the intended S-D pairs, by eliminating the multi-hop delays inherent in static, pre-determined path-based CQR.

\begin{figure}[t]
    \centering    \includegraphics[width=0.475\textwidth]{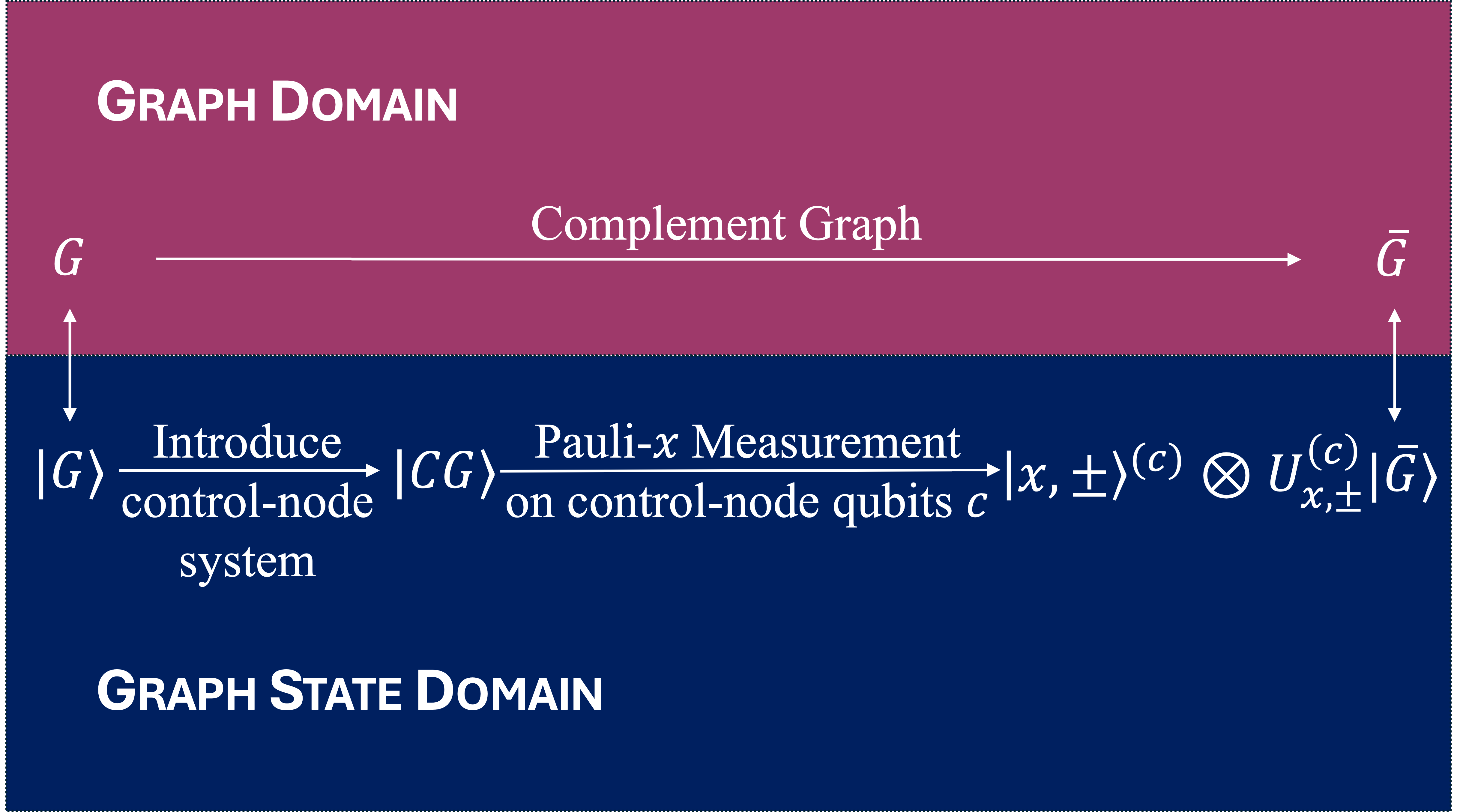}
    \caption{Schematic diagram of the correspondence between graph complement in \textit{graph} domain and Multipartite Entanglement Complementation in \textit{graph state} domain, i.e, of the mapping between introducing control-node system then $x$-measurement through Pauli operators on graph states and transformations of the associated graphs.}
    \label{fig:03}    
    \vspace{0.5ex}
    \hrulefill
\end{figure}

\section{Research Problem}
\label{sec:3}

As aforementioned, we address the quantum routing problem under the most stringent resource constraint, \textcolor{black}{namely a Routing-Qubit Footprint (RQF) of one qubit per node}. In this context, we focus on routing across multiple Quantum networks \textcolor{black}{(QNets)}, which serve as fundamental building blocks of the considered architecture.

To formally define the research problem, some preliminary definitions are needed. We assume a graph state\footref{ft:distribution} $\ket{G}$ distributed across $k$ different QNets, where each qubit resides at a distinct network node. The graph state is associated with a graph $G$ \cite{IllCalMan-22,CheCacCal-26,MazCalCac-25}.

\begin{defin}[\textbf{Inter-Links}]
    \label{def:x01}
    Given $k$ QNets, with node sets $V_1, \dots, V_k$, an inter-link is defined as an artificial link $E_{a.i, b.j}$ connecting a node $v_i \in V_a$ in QNet $V_a$ to a node $v_j \in V_b$ in a different QNet $V_b$, with $a\neq b$. Formally:
    \begin{equation}
        \label{eq:def:x01.1}
        E_{a.i,b.j}  \eqdef (v_i \in V_a , v_j \in V_b), 
    \end{equation}
    with $a, b \in \{1,\dots,k\}$, and $a\neq b$. For notational convenience, we also denote the node $v_i \in V_a$ as $v_{a.i}$.
\end{defin}
We generalize the concept of vertex-adjacency\footref{ft:remote} set, by defining $N_a(v_i)$ as the set of nodes in other QNets,
$\bigcup_{b=1, b\neq a}^k V_b$, that are connected to $v_i \in V_a$ via inter-links. For the sake of notation simplicity, in the following, we use the symbol $E$ to denote the set of established inter-links among $k$ QNets. 

Correspondingly, for non-adjacency\footref{ft:remote} set, we define $\overline{N}_a(v_i)$ as the set of nodes in other QNets,
$\bigcup_{b=1, b\neq a}^k V_b$, that are \textbf{not} connected to $v_i \in V_a$ via inter-links:
\begin{equation}
    \label{eq:def:x04.2}
    \overline{N}_a(v_i) \eqdef \big\{ v_j \in \bigcup_{b=1, b\neq a}^k V_b : (v_i,v_j ) \not\in E \big\}.
\end{equation}

\begin{defin}[\textbf{Complement Inter-Links}]
    \label{def:x04}
    Given $k$ QNets, with node sets $V_1, \dots, V_k$, the complement inter-links of a node $v_i \in V_a$ in QNet $V_a$ is defined as a set of all possible inter-links $\bar{E}(v_i)$ between $v_i$ and its non-adjacency node set $\overline{N}_a(v_i)$. 
    Formally:
    \begin{equation}
        \label{eq:def:x04}
        \bar{E}(v_i) \eqdef \big\{ (v_i, v_j) \bigm| v_j \in \overline{N}_a(v_i) \big\},
    \end{equation}
    with $v_i \in V_a$, $a \in \{1, \dots, k\}$, and $\overline{N}_a(v_i)$ given by \eqref{eq:def:x04.2}.
\end{defin}

Similarly, we use the symbol $\bar{E}$ to denote the union set of complement Inter-links among $k$ QNets.

\begin{defin}[\textbf{Inter-QNet}]
    \label{def:x02}
    Given $k$ QNets $V_1, \dots, V_k$, if there exists a set of inter-links that connects them into a single connected inter-domain network, we say that these $k$ QNets are \textit{interactive}, and the resulting network constituted by all the QNets nodes with their inter-links is called an Inter-QNet. The Inter-QNet is described by the graph state $\ket{ G_{n_1, \dots, n_k} }$, with associated graph $G_{n_1, \dots, n_k}$ given by:
    \begin{equation}
        \label{eq:def:x02}
        G_{n_1,\dots, n_k}  \eqdef (V_1, \dots, V_k, E), 
    \end{equation}
    with $|V_i|=n_i$ and inter-link set $E$, defined above.
\end{defin}

\begin{defin}[\textbf{Complement Inter-QNet}]
    \label{def:x05}
    Given an Inter-QNet $\ket{ G_{n_1,\dots, n_k} }$  with associated graph $G_{n_1,\dots, n_k}=(V_1,\dots, V_k, E)$, its complement Inter-QNet $\ket{ \bar{G}{n_1,\dots, n_k} }$ corresponds to the graph $\bar{G}_{n_1,\dots, n_k}$ consisting of the same QNets node sets $V_1, \dots, V_k$, while the inter-links set is given by the complement inter-links set $\bar{E}$. Formally:
    \begin{equation}
        \label{eq:def:x05}
        \bar{G}_{n_1,\dots, n_k} \eqdef  \big ( V_1,\dots, V_k, \bar{E} \big), 
    \end{equation}
    with $|V_i|=n_i$ and complement inter-links set $\bar{E}$, defined above.
\end{defin}

\begin{figure*}[t]
    \centering    \includegraphics[width=\textwidth]{Figures/MainIdea.png}
    \caption{Pictorial illustration of research problem and main idea. Consider a Controlled Inter-QNet $\ket{CG_{3,2,2,2}}$ with corresponding graph $CG_{3,2,2,2}$ shown in Fig.~\ref{fig:04}a\textsuperscript{5}. It consists of a control-node system (Fig.~\ref{fig:04}b) and the original Inter-QNet network (Fig.~\ref{fig:04}c), comprising four QNets represented by blue, red, green and yellow nodes. A set $R$ of remote EPR requests is listed in Tab.~I. Rather than searching for individual paths, by Pauli-$x$ measurement on all control nodes, we transform the original Inter-QNet network into its complement one (Fig.~\ref{fig:04}d), where all the requested pairs in $R$ become directly connected. Subsequently, the Dynamic Parallel Pairs algorithm (Alg.~\ref{alg:01}, Sec.~\ref{sec:4.2}) identifies parallel pairs for each requested pairs in $R$, which are then autonomously selected as shown in Fig.~\ref{fig:04}e. The final dynamic parallel pairs are mapped to Tab.~II.A-C and summarized in Tab.~II. Specifically, as shown in Tab.~II.A-C, each round lists all parallel pairs candidates for the current target, with the next one autonomously chosen (marked in red).}
    \label{fig:04}
    \vspace{0.5ex}
    \hrulefill
\end{figure*}
\footnotetext[5]{\textcolor{black}{Fig.~\ref{fig:04}.a depicts the number of logical qubits -- i.e., the vertices of the target controlled inter-QNet state -- while the physical qubit overhead associated to the generation process depends on the underlying hardware. Please refer to Sec.~\ref{sec:6} for detail.}}

Accordingly, the Inter-QNet concept emphasizes the connectivity activated by the shared entanglement across distinct QNets. We augment the Inter-QNet with a control-node system, resulting in a Controlled Inter-QNet, similar to \cite{CalCac-26}. This design choice introduces a distinction between two types of network nodes: the original QNet nodes, referred to simply as \textit{nodes}, and the nodes of the control system, referred to as \textit{control nodes}. Alternative design choices are possible: for example electing a node within a QNet as a control node, according to a specific optimization criterion, as done in \cite{CheIllCac-25,MazCalCac-25}. Importantly, such variations do not affect the analysis; they only require adjusting the definitions of the node sets $V_1, \dots, V_k$ to reflect the different node classes. We do not explore these design alternatives further, as they lie outside the scope of this paper and do not affect the key insights provided. Furthermore, by introducing the control system as follows, we are providing a more flexible framework, which allows researchers to adopt different design choices, depending on the specific application. 

In the following, we denote with  $k'$ the number of control nodes and as proved in the App.~\ref{app:lem:x01}, $k'$ should be set as $k'= k+(k\mod2)$, where $k$ denotes the number of QNets. The control system has the features summarized as follows:
\begin{enumerate}
    \item Control-node system: The control nodes form a fully connected control-layer $C$ on top of the Inter-QNet network.
    \item Node-Control Interface: Each QNet $V_a$ is associated with a control node $c_a$, which is entangled-connected with the nodes $v_i \in V_a$.
\end{enumerate}
Accordingly, and as formalized in Def.~\ref{def:x03}, the control system and the QNets are represented by a graph state denoted as $\ket{ CG_{n_1, \dots, n_k} }$. Fig.~\ref{fig:04}a shows a controlled Inter-QNet $\ket{CG_{3,2,2,2}}$, with its control-node system in Fig.\ref{fig:04}b and the original Inter-QNet network in Fig.~\ref{fig:04}c. The above structure is formally captured in the following definition.

\begin{defin}[\textbf{Controlled Inter-QNet}]
    \label{def:x03}
    Given an Inter-QNet $\ket{ G_{n_1,\dots, n_k} }$ with corresponding graph
    $G_{n_1,\dots, n_k}=(V_1,\dots, V_k, E)$, a controlled Inter-QNet $\ket{ CG_{n_1, \dots, n_k} }$ augments it with a control-node system, constituted by $k'$ nodes. Each control node $c_i$ is entangled-connected to all the nodes in its associated QNet $V_i$, and all control nodes $c_j$ are mutually interconnected, forming a fully connected control layer. The corresponding graph $CG_{n_1, \dots, n_k}$, associated to the resulting overall graph state, is:
    \begin{align}
        \label{eq:def:x03}
        & CG_{n_1,\dots, n_k} =
        \\ \nonumber
        = & \big(
        \big( \bigcup_{i=1}^k V_i \big) \bigcup \big( \bigcup_{j=1}^{k'} c_j \big),  \\ \nonumber
         & E \cup 
         \big( \bigcup_{ \substack{ 1\leq i,j \leq k'\\ i\neq j} } (c_i,c_j) \big) \cup
         \big( \bigcup_{a=1}^k \big(
         \bigcup_{ \substack{ v_{a.i} \in V_a } } (v_{a.i}, c_a)  \big) \big)
        \big),
    \end{align}
with $k'= k+(k\mod2)$, and $\bigcup_{i=1}^k V_i, \,E$ defined in~\eqref{eq:def:x02}.
\end{defin}

Building on our inter-domain network framework, we formally define the research problem.

\begin{prob}
Let us consider an inter-domain quantum network, constituted by $k$ interactive QNets with a control-node system, in which a multipartite graph state is distributed. Nodes in any QNet can request the establishment of EPR pairs with arbitrary non-adjacent\footref{ft:remote}  nodes located in different QNets. These requests are aggregated by  control-nodes into a set $R$ of remote EPR requests, where each request $r \in R$ represents a source-destination (S-D) pair. The objective is to design a quantum routing scheme that:
    \begin{itemize}
        \item[i)] Establishes EPR pairs for all requested S-D pairs in $R$;
        \item[ii)] Autonomously selects multiple S-D pairs in $R$ to communicate in parallel.
    \end{itemize}
\end{prob}

\begin{main}
We leverage the controlled Inter-QNet $\ket{CG_{n_1,\dots,n_k}}$ entangled resource for solving the routing problem through two stages: (1) constructing the complement Inter-QNet (see Sec.~\ref{sec:4.1}) and (2) enabling parallel communication opportunities for  multiple S-D pairs (see Sec.~\ref{sec:4.2}). This approach differs fundamentally from conventional quantum routing:
\begin{itemize}
    \item \textit{\textcolor{black}{Entanglement graph} engineering}: We reconfigure the network graph into its complementary form, by converting all requested S–D pairs into direct neighbors for end-to-end entanglement establishment. (see Sec.~\ref{sec:4.1}).
    \item \textit{Autonomous parallelism}: The proposed polynomial-time algorithm identifies compatible parallel pairs from $R$, by enabling autonomous selection of communication targets (see Sec.~\ref{sec:4.2}).
\end{itemize}
\end{main}
For clarity, Fig.~\ref{fig:04} provides a pictorial illustration of the research problem. 

\section{Beyond Pathfinding: Multipartite Entanglement Complementation}
\label{sec:4}

\subsection{Controlled Inter-QNet Complementation}
\label{sec:4.1}

\noindent As outlined in the research problem, rather than individually searching paths for each S-D pair, we aim to simultaneously establish end-to-end entanglement for all S-D pairs in $R$. This is achieved via the Multipartite Entanglement Complementation method, as introduced in the following Lemma~\ref{lem:x01}.

\begin{lem}[\textbf{Multipartite Entanglement Complementation}]
\label{lem:x01}
    Let us consider a given Controlled Inter-QNet $\ket{CG_{n_1,\dots, n_k}}$ with corresponding graph $CG_{n_1,\dots, n_k}$ in \eqref{eq:def:x03}. By performing Pauli-$x$ measurements $\sigma_x$ on all control nodes, the  nodes are projected onto the complement Inter-QNet $\ket{\bar{G}_{n_1,\dots, n_k}}$, where  nodes become directly connected to their remote counterparts in different QNets. Formally,
    \begin{equation}
        \label{eq:lem:x01.2}
        \ket{CG_{n_1, \dots, n_k}} \xrightarrow[\text{on } \cup_{i=1}^{k'} c_i]{\sigma_x}
        \bigotimes\limits_{i=1}^{k'} \ket{x,\pm}^{(c_i)} \otimes U^{(\cup_{i=1}^{k'} c_i)}_{x,\pm} \ket{\bar{G}_{n_1, \dots, n_k}},
    \end{equation}
    where the corresponding graph $\bar{G}_{n_1,\dots, n_k}$ is given in~\eqref{eq:def:x05}. 
    \begin{IEEEproof}
        Please refer to App.~\ref{app:lem:x01}. 
    \end{IEEEproof}
\end{lem}

\begin{cor}
\label{cor:x01}
The controlled inter-QNet state $\ket{CG_{n_1,\dots,n_k}}$ in Lemma~\ref{lem:x01} alternatively yields the original inter-QNet state $\ket{G_{n_1,\dots,n_k}}$ via Pauli-$z$ measurements on all control nodes. 
\end{cor}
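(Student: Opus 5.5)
The plan is to use the standard Pauli-$z$ measurement rule for graph states, in the same formalism as the proof of Lemma~\ref{lem:x01}. Measuring $\sigma_z$ on a vertex $c$ of a graph state $\ket{H}$ yields
\begin{equation*}
\ket{H} \xrightarrow[\text{on } c]{\sigma_z} \ket{z,\pm}^{(c)} \otimes U^{(c)}_{z,\pm} \ket{H - c},
\end{equation*}
where $H-c$ is obtained by deleting $c$ and all its incident edges. The local correction is $U^{(c)}_{z,+}=I$ and $U^{(c)}_{z,-}=\prod_{b\in N_H(c)} \sigma_z^{(b)}$. No local complementation occurs, unlike the $\sigma_x$ (and $\sigma_y$) rules. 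This rule will be quoted from the graph-state literature already cited in the paper (e.g.\ \cite{HeiDurEis-06} style results, or the references used for Lemma~\ref{lem:x01}).

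Next, I would apply this rule sequentially to the $k'$ control nodes $c_1,\dots,c_{k'}$ of $CG_{n_1,\dots,n_k}$ in \eqref{eq:def:x03}. Since $\sigma_z$ measurements commute and each one only deletes a vertex, the order does not matter. The resulting graph is $CG_{n_1,\dots,n_k}$ with all control vertices removed. This removes three edge families: the control clique $(c_i,c_j)$, the node–control interface edges $(v_{a.i},c_a)$, and any edges incident to the extra control node when $k$ is odd. None of these edges lies in $E$, so what remains on $\bigcup_a V_a$ is exactly $(V_1,\dots,V_k,E)=G_{n_1,\dots,n_k}$ by Definition~\ref{def:x02}.

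Finally, I would collect the byproduct operators. The combined correction is $\prod_{i:\,\text{outcome } -} \prod_{b\in N(c_i)}\sigma_z^{(b)}$. Factors acting on control qubits that have already been measured only contribute phases to the $\ket{z,\pm}$ states. So the residual unitary on the nodes is a product of $\sigma_z$ on $V_a$ for each $c_a$ with outcome $-$. This is a local Pauli that the control nodes can announce classically, so the nodes share $\ket{G_{n_1,\dots,n_k}}$ up to known local unitaries, mirroring the form of \eqref{eq:lem:x01.2}.

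The only delicate step is the bookkeeping of byproducts in the sequential measurements, since the neighborhoods change as vertices are deleted. Because $\sigma_z$ measurements never create edges, tracking them reduces to the commuting-Pauli argument above. The parity choice $k'=k+(k \bmod 2)$ plays no role here; it matters only for the $\sigma_x$ case.
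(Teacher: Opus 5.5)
Your proposal is correct and matches the reasoning the paper relies on. The paper states this corollary without a separate proof, resting on the standard rule that a $\sigma_z$ measurement deletes the measured vertex (the same rule it invokes in the proof of Lemma~\ref{lem:x02}); measuring all control nodes therefore strips the control clique and the node--control edges from $CG_{n_1,\dots,n_k}$, leaving exactly $(V_1,\dots,V_k,E)$.

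Your byproduct bookkeeping is more careful than anything the paper provides: a local $\sigma_z$ on all of $V_a$ whenever $c_a$ returns the $-$ outcome. One wording fix: the factors landing on control qubits act on not-yet-measured ones. They still reduce to signs, because they commute with the later $\sigma_z$ projectors. Your remark that the parity choice of $k'$ plays no role here is also right.
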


\begin{remark}
    Lem.~\ref{lem:x01} and Cor.~\ref{cor:x01} demonstrate that we can dynamically switch between the original Inter-QNet $\ket{G}$ and the complement Inter-QNet $\ket{\bar{G}}$ configurations through appropriate measurements on the control-node system, enabling reconfigurable quantum networking. This capability essentially reflects \textit{controlled Inter-QNet complementation}.
\end{remark}

By Lem.~\ref{lem:x01}, the immediate establishment of EPR pairs for any S-D pair in Objective i) is achieved. This is because the resulting complement graph state $\ket{\bar{G}_{n_1,\dots,n_k}}$ possesses inter-links that correspond precisely to the desired S-D pairs specified in the request set $R$. Measuring extra qubits of the $\ket{\bar{G}_{n_1,\dots,n_k}}$ state therefore directly projects onto the EPR pairs connecting the desired S-D pairs.

\subsection{Dynamic Parallel Pairs Algorithm}
\label{sec:4.2}

\noindent Given \textcolor{black}{a RQF of one qubit per node} and no resource contention or serialization, \textit{parallel pairs} refer to a set of source-destination pairs that can simultaneously establish EPR pairs in a single, conflict-free round. In classical routing, achieving parallelism reduces to the NP-complete Node-Disjoint Paths (NDP)\cite{Jen-95}. To circumvent this complexity, we restate sufficient conditions for parallelism (see Lem.~\ref{lem:x02}), tailored to our proposed MEC routing scheme. Building on this, we propose a dynamic parallel pair algorithm (see Alg.~\ref{alg:01}) that operates directly on the complement network to identify multiple  parallel pairs. This approach fundamentally eliminates the reliance on path optimization and dynamically generates a list of parallel-pairs candidates for all S-D requests in polynomial time.

\begin{lem}[\textbf{Conditions for Parallel Pairs}]
    \label{lem:x02}
    Given an Inter-QNet $\ket{G_{n_1,\dots,n_k}}$ with associated graph $G_{n_1,\dots,n_k} = (V_1, \dots, V_k, E)$, two inter-links $E_{a.i,b.j}, E_{c.m,d.n} \in E$ form parallel pairs if they satisfy: 
    \begin{enumerate}[label=(\roman*)]
        \item Disjoint Endpoints: These links share no common vertices;
        \item Neighborhood Exclusion: Neither link endpoints are neighbors of another link endpoints.
    \end{enumerate}
    i.e., if:
    \begin{align}
        \label{eq:def:x06}
       & \{v_{a.i}, v_{b.j}\} \cap \{v_{c.m}, v_{d.n}\} = \emptyset \; \wedge \\ \nonumber
       \wedge \;
       &   \{v_{a.i}, v_{b.j}\} \cap \big(N(v_{c.m}) \cup N(v_{d.n})\big) = \emptyset.
    \end{align}
    \begin{IEEEproof}
        Performing Pauli-$z$ measurement on all nodes except $v_{a.i}, v_{b.j}, v_{c.m}$ and $v_{d.n}$ projects the state $\ket{G_{n_1,\dots,n_k}}$ onto a subgraph containing only the two inter-links, i.e., $E_{a.i,b.j}$ and $E_{c.m,d.n}$. Under the given conditions, these inter-links are both Disjoint Endpoints and  Neighborhood Exclusion, ensuring no entanglement conflict. Thus two independent EPR pairs simultaneously established: one along $E_{a.i,b.j}$ and the other along $E_{c.m,d.n}$, forming parallel pairs.         
    \end{IEEEproof}
\end{lem}

We say that inter-links $E_{a.i,b.j}$ and $E_{c.m,d.n}$ are \textit{parallel-pair candidates} of each other if they satisfy the conditions of Lem.~\ref{lem:x02} and can thus form parallel pairs. To generalize this pairwise compatibility to larger sets, we define:

\begin{defin}[\textbf{Parallel-Pairable}]
    \label{def:x06}
    Given an Inter-QNet $\ket{ G_{n_1,\dots, n_k} }$ with corresponding graph $G_{n_1,\dots, n_k}=(V_1, \dots, V_k, E)$, a set of inter-links $\mathcal{E} \subseteq E$ is \textit{parallel-pairable} if every pair of inter-links in $\mathcal{E}$ satisfies Lemma~\ref{lem:x02}.
\end{defin}

\renewcommand{\arraystretch}{1.3}
\begin{table}[t]    
    \caption{\textcolor{black}{Summary of Notations in Algorithms~\ref{alg:01} --~\ref{alg:03}}}
    \label{tab:notations}
    \fontsize{8pt}{8pt}\selectfont
    \centering
    \begin{tabular}{>{\centering\arraybackslash}m{1cm} m{0.8\linewidth}}
    \toprule
    \textcolor{black}{Notation} & \textcolor{black}{Explanation} \\ \hline
    \textcolor{black}{$T$} & \textcolor{black}{Output Parallel-Pair Table, each entry $T_j$ is a parallel-pairable subset of inter-links} \\
    \textcolor{black}{$T_j$} & \textcolor{black}{The $j$-th parallel-pairable group in table $T$} \\
    \textcolor{black}{$E^i$} & \textcolor{black}{The $i$-th Inter-links in the graph} \\
    \textcolor{black}{$\bar{E}^i$} & \textcolor{black}{Inter-link in complement graph, corresponding to original $E^i$} \\
    \textcolor{black}{$C(T_j)$} & \textcolor{black}{Common candidate set shared by all inter-links in group $T_j$} \\
    \textcolor{black}{$C(E^i)$} & \textcolor{black}{Set of parallel-pair candidates for inter-link $E^i$} \\
    \textcolor{black}{$\mathcal{E}$} & \textcolor{black}{Input inter-link set in Alg.~\ref{alg:02} and~\ref{alg:03} (corresponds to $R$ in Alg.~\ref{alg:01})} \\
    \textcolor{black}{$\tilde{m}$} & \textcolor{black}{Number of inter-links in input request set $\mathcal{E}$} \\
    \textcolor{black}{$m$} & \textcolor{black}{Total number of inter-links in the entire graph} \\
    \textcolor{black}{$N(E^i)$} & \textcolor{black}{Set of inter-links incompatible with $E^i$ (non-candidates)} \\
    \textcolor{black}{$N$} & \textcolor{black}{Union of all $N(E^i)$, used for conflict detection} \\    
    \bottomrule
    \end{tabular}   
\end{table}

\begin{algorithm}[t]    
    \caption{Dynamic Parallel Pairs $(CG_{n_1,\dots,n_k}, R)$}
    \label{alg:01}
    \begin{spacing}{1.05}
    \begin{algorithmic}[1]
        \Require Controlled Inter-QNet $CG_{n_1,\dots,n_k}$,
        \Statex \hspace*{1.3em} Remote Request set $R \subseteq \bar{E}$
        \Ensure Parallel-Pairs Table $T$
        \State $\bar{G}_{n_1, \dots, n_k} \gets$ \Call{Complement}{$CG_{n_1,\dots,n_k}$} 
        \State $\text{Check} \gets$ \textbf{Alg.~\ref{alg:02}} \Call{Check Parallel-Pairable}{$\bar{G}_{n_1, \dots, n_k}, R$}
        \If{$\text{Check} = \textbf{True}$} 
            \State \Return $R$ \Comment{$R$ is already parallel-pairable}
        \EndIf
        \State $T \gets \emptyset$, $j \gets 1$ \Comment{Initialize output table and group counter}
        \While{$R \neq \emptyset$}
            \State $T_j \gets \emptyset$, $C(T_j) \gets \emptyset$ \Comment{Initialize new group $T_j$ and its common candidates}
            \State $\{(\bar{E}^i, C(\bar{E}^i))\}_{i=1}^{\tilde{m}} \gets$ \textbf{Alg.~\ref{alg:03}} \Call{Parallel-Pair Candidate}{$\bar{G}_{n_1,\dots,n_k}, R$}
            \State Select $\bar{E}^i \in R$ \Comment{Pick a seed inter-link for $T_j$}
            \State $T_j \gets \{\bar{E}^i\}$, $R \gets R \setminus \{\bar{E}^i\}$
            \State $C(T_j) \gets C(\bar{E}^i) \cap R$ \Comment{Initial candidates}
            \While{$C(T_j) \neq \emptyset$}
                \State Select $\bar{E}^l \in C(T_j)$ \Comment{Add a compatible inter-link to $T_j$}
                \State $T_j \gets T_j \cup \{\bar{E}^l\}$, $R \gets R \setminus \{\bar{E}^l\}$
                \State $C(T_j) \gets C(T_j) \cap C(\bar{E}^l)$ \Comment{Update candidates}
            \EndWhile
            \State $T \gets T \cup \{T_j\}$, $j \gets j + 1$ \Comment{Finalize group $T_j$}
        \EndWhile        
        \State \Return $T$
    \end{algorithmic}
    \end{spacing}
\end{algorithm}

\begin{algorithm}[t]    
    \caption{Check Parallel-Pairable $(G_{n_1,\dots,n_k}, \mathcal{E})$}
    \label{alg:02}
    \begin{spacing}{1.05}
    \begin{algorithmic}[1]
        \Require Inter-QNet $G_{n_1,\dots,n_k}$, Inter-Links set $\mathcal{E} \subseteq E$
        \Ensure \textbf{True} if $\mathcal{E}$ is parallel-pairable, else \textbf{False}
        \State $\{(E^i, C(E^i))\}_{i=1}^{\tilde{m}} \gets$ \textbf{Alg.~\ref{alg:03}} \Call{Parallel-Pair Candidate}{$G_{n_1,\dots,n_k}, \mathcal{E}$}
        \For{$i=1$ to $\tilde{m}$}
            \State $N(E^i) \gets E \setminus C(E^i) \setminus \{E^i\}$ \Comment{Incompatible inter-links for $E^i \in \mathcal{E}$}
        \EndFor
        \State $N \gets \bigcup_{i=1}^{\tilde{m}} N(E^i)$ \Comment{Aggregate all incompatible inter-links}
        \If{$\mathcal{E} \cap N = \emptyset$}
            \State \Return \textbf{True} \Comment{No conflicts in $\mathcal{E}$}
        \Else
            \State \Return \textbf{False}
        \EndIf
    \end{algorithmic}
    \end{spacing}
\end{algorithm}

\begin{algorithm}[t]    
    \caption{Parallel-Pair Candidate $(G_{n_1,\dots,n_k}, \mathcal{E})$}
    \label{alg:03}
    \begin{spacing}{1.05}
    \begin{algorithmic}[1]
        \Require Inter-QNet $G_{n_1,\dots,n_k}$, Inter-Links set $\mathcal{E} \subseteq E$
        \Ensure Candidate list $\{(E^i, C(E^i))\}_{i=1}^{\tilde{m}}$ for $\mathcal{E}$
        \State $\{E^1, \dots, E^{\tilde{m}}\} \gets \mathcal{E}$ \Comment{Label input inter-links}
        \State $\{E^{\tilde{m}+1}, \dots, E^m\} \gets E \setminus \mathcal{E}$ \Comment{Label non-input inter-links}
        \For{$i=1$ to $\tilde{m}$}
            \State $C(E^i) \gets \emptyset$ \Comment{Initialize candidates for $E^i$}
            \For{$j=1$ to $m$}
                \If{$E^j, E^i$ satisfy \eqref{eq:def:x06} in $G_{n_1,\dots,n_k}$}
                    \State $C(E^i) \gets C(E^i) \cup \{E^j\}$ \Comment{Add a compatible inter-link}
                \EndIf
            \EndFor
        \EndFor
        \State \Return $\{(E^i, C(E^i))\}_{i=1}^{\tilde{m}}$ \Comment{Return candidates for $\mathcal{E}$}
    \end{algorithmic}
    \end{spacing}
\end{algorithm}

To address the research problem,  Lem.~\ref{lem:x01} enables immediate End-to-End entanglement establishment (Objective i). To achieve parallel communications (Objective ii), we reduce the problem to partitioning the remote request set $R$ into parallel-pairable subsets, leading to our core dynamic Parallel Pairs algorithm. \textcolor{black}{The local symbols used in the algorithm design in this section are shown in Tab.~\ref{tab:notations}; the definitions of global symbols follow those in Tab.~\ref{tab:02} above and will not be repeated.}

\textit{1) Algorithm Design:} Our solution framework integrates three polynomial-time algorithms. The core processor, Alg.~\ref{alg:01}, dynamically partitions the global request set $R$ into parallel-pairable subsets for simultaneous establishment. It is supported by Alg.~\ref{alg:03}, which generates the parallel-pair candidates for each inter-link within the entire inter-domain network, and Alg.~\ref{alg:02}, which verifies if a given set of inter-links is parallel-pairable.

Our dynamic Parallel-Pairs algorithm is described in Alg.~\ref{alg:01}. We take the
corresponding graph $CG_{n_1, \dots, n_k}$ of initial controlled Inter-QNet $\ket{CG_{n_1, \dots, n_k}}$ and Remote Request $R$ as Input, and require a dynamic parallel-pairs table $T$ as Output. The algorithm begins by transforming $CG_{n_1,\dots,n_k}$ into its complement form $\bar{G}_{n_1,\dots,n_k}$ (Lem.~\ref{lem:x01}), whereby $R$ corresponds precisely to a set of inter-links in $\bar{G}_{n_1,\dots,n_k}$. Then we check if $R$ is already parallel-pairable (Lines 2-5, Alg.~\ref{alg:01}) using   \Call{Check Parallel-Pairable}{$\bar{G}_{n_1, \dots, n_k}, R$} in Alg.~\ref{alg:02}. If true, it immediately returns $R$ and terminates. Otherwise, it begins constructing table $T$, where each entry $T_j$ is a parallel-pairable subset of inter-links, i.e., S-D pairs in $R$. The main \texttt{\textsc{while}} loop (Lines 7–19, Alg.~\ref{alg:01}) continues until all requests in $R$ are processed, selecting compatible inter-links for each group $T_j$. Specifically, at each iteration, a new group $T_j$ is initialized, along with its candidates set $C(T_j)$, representing the common set of parallel-pair candidates for all inter-links in $T_j$. Then we call  \Call{Parallel-Pair Candidate}{$\bar{G}_{n_1, \dots, n_k}, R$} in Alg.~\ref{alg:03} to compute candidates for each inter-link, i.e., S-D pair in $R$ (Line 9, Alg.~\ref{alg:01}). Specifically, a seed inter-link is autonomously selected to activate $T_j$ and make sure its candidate set $C(T_j) \subseteq R$ (Lines 11–12, Alg.~\ref{alg:01}). While $C(T_j) \neq \emptyset$, a nested \texttt{\textsc{while}} loop (Lines 13–17, Alg.~\ref{alg:01}) adds compatible inter-links from $C(T_j)$ into $T_j$, and updating $C(T_j)$ at each step. Once the nested \texttt{\textsc{while}} loop finalized, $T_j$ is added to $T$ while its inter-links are removed from $R$, and the algorithm proceeds to form the next group $T_{j+1}$. After all inter-links are processed, the algorithm returns the completed table $T$.

\begin{remark}
Our algorithm produces a dynamic output shaped by autonomously selected seed inter-links (Line 10, Alg.~\ref{alg:01}) and progressive candidate refinement (Lines 13–17, Alg.~\ref{alg:01}), enabling fully autonomous parallel-pairable group formation.
\end{remark}

Once entering \Call{Check Parallel-Pairable}{$\bar{G}_{n_1, \dots, n_k}$, $R$} functionalities, Alg.~\ref{alg:02} is executed to determine whether the input $R$ is parallel-pairable over the Complement Inter-QNet $\bar{G}_{n_1, \dots, n_k}$. Firstly, we call \Call{Parallel-Pair Candidate}{$\bar{G}_{n_1, \dots, n_k}, R$} in  Alg.~\ref{alg:03} to identify all parallel-pairs candidates for each inter-links $\bar{E}^i \in R$. Then we update the incompatible set $N(\bar{E}^i)$ (Line 2-4, Alg.~\ref{alg:02}), i.e., the set of non-parallel-pairs candidates for $\bar{E}^i$, and merge them as the union set $N$ (Line 5, Alg.~\ref{alg:02}). Determining whether $R$ is parallel-pairable simply reduces to checking whether 
$R$ and $N$ are disjoint (Lines 6–10, Alg.~\ref{alg:02}).

Once entering \Call{Parallel-Pair Candidate}{$\bar{G}_{n_1, \dots, n_k}, R$} functionalities, 
Alg.~\ref{alg:03} is executed to construct a list of parallel-pairs candidates for each inter-link $\bar{E^i}$, i.e., S-D pair in the input set $R$ over the Complement Inter-QNet $\bar{G}_{n_1, \dots, n_k}$. After the preliminary operations in lines 1-2, Alg.~\ref{alg:03} searches candidates for each inter-link $\bar{E}^i \in R$ sequentially. It examines each $\bar{E}^i$ against all other inter-links in $\bar{G}_{n_1, \dots, n_k}$ and adds those that satisfy Eq.~\eqref{eq:def:x06} (line 6, Alg.~\ref{alg:03}) to the set $C(\bar{E}^i)$. Finally, it outputs the collection of each $\bar{E}^i \in R$ and its parallel pairs candidates $C(\bar{E}^i)$.

\textit{2) Complexity Analysis:} We analyze the computational complexity of Algorithms~\ref{alg:01},~\ref{alg:02}, and~\ref{alg:03}, and establish the following desirable property: all three algorithms run in polynomial time.

\begin{theo}
    \label{theo:x01}
    For any Inter-QNet $\ket{G_{n_1, \dots, n_k} }$ with corresponding graph $G_{n_1, \dots, n_k}=(V_1, \dots, V_k, E)$, let $\mathcal{E} \subseteq E$ denote a given set of inter-links. Algorithm~\ref{alg:03} computes parallel-pair candidates for each inter-link in $\mathcal{E}$ in $O(|E|)$ time,  
    yielding a total time complexity of $O(|\mathcal{E}|\cdot|E|)$.
    \begin{IEEEproof}
        Please refer to App.~\ref{app:theo:x01,x02,x03}
    \end{IEEEproof}
\end{theo}

\begin{theo}
    \label{theo:x02}
    For any Inter-QNet $\ket{G_{n_1, \dots, n_k} }$ with corresponding graph $G_{n_1, \dots, n_k}=(V_1, \dots, V_k, E)$,  let $\mathcal{E} \subseteq E$ denote a given set of inter-links. Algorithm~\ref{alg:02} determines whether $\mathcal{E}$ is parallel-pairable with time complexity $O(|\mathcal{E}|\cdot|E|)$.
    \begin{IEEEproof}
        Please refer to App.~\ref{app:theo:x01,x02,x03}
    \end{IEEEproof}
\end{theo}

\begin{theo}
    \label{theo:x03}
   For any Controlled Inter-QNet $\ket{CG_{n_1, \dots, n_k} }$ with  associated complement Inter-QNet graph $\bar{G}_{n_1, \dots, n_k}=(V_1, \dots, V_k, \bar{E})$, let $R \subseteq \bar{E}$ denote a given set of remote connection request. Algorithm~\ref{alg:01} dynamically partitions $R$ into parallel-pairable subsets with time complexity $O(|R|^2\cdot|\bar{E}|)$.
    \begin{IEEEproof}
        Please refer to App.~\ref{app:theo:x01,x02,x03}
    \end{IEEEproof}
\end{theo}


\section{Performance Evaluation}
\label{sec:5}

\begin{figure}[t]
    \centering    \includegraphics[width=0.45\textwidth]{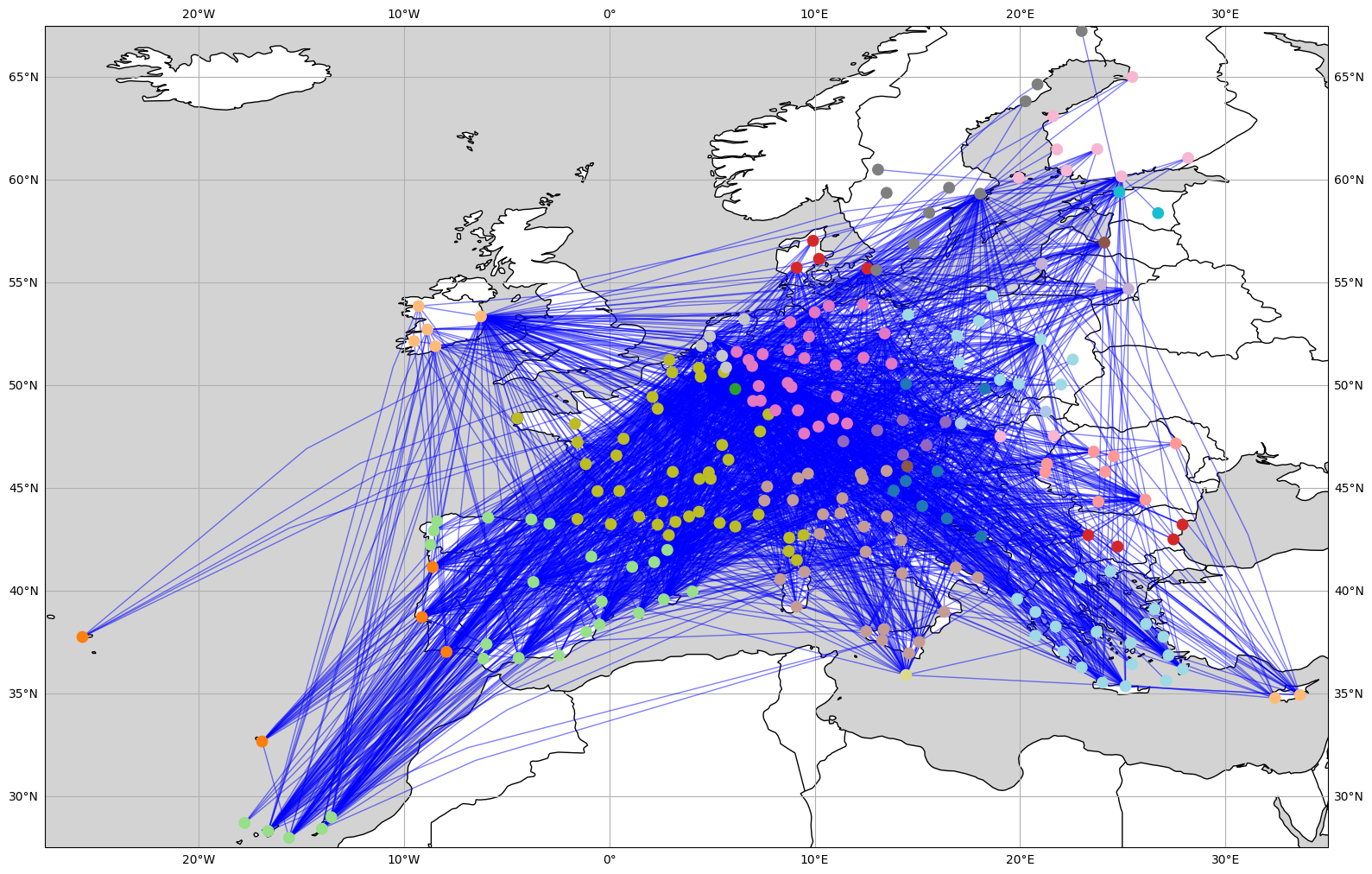}
    \caption{Schematic diagram of a sample Inter-QNet Graph utilizing international flight routes data between European Union countries, sourced from Openflight.org. Each city is modeled as a  node, with nodes from the same country forming a QNet and sharing the same color. International routes are depicted as inter-links connecting nodes across different QNets.}
    \label{fig:05}
    \vspace{0.5ex}
    \hrulefill
\end{figure}

\subsection{Methodology}
\label{sec:5.1}

\noindent  We evaluate our protocols in both synthetic and real-world inter-domain network graph datasets. For synthetic structure, we use the $k$-colorable graph generation method in the NetworkX library, such that the generated graph satisfies Inter-QNet graphical characteristics, such as the $k$ QNets distribution and inter-links structure. 
\textcolor{black}{ A density parameter $p$ is used to control the connectivity of synthetic network: after ensuring connectivity via a spanning tree, additional inter-links between nodes of different colours are added with probability $p$. Thus $p$ directly determines the expected edge density of the generated graph. In the simulations, we consider two representative regimes: $p=0.8$ (synthetic dense networks) and $p=0.2$ (synthetic sprase networks).}

For the real-world dataset, we construct an Inter-QNet network based on global aviation data. Specifically, we utilize the dataset provided by the Openflights organization (Openflight.org) \cite{OpenFlights-Data}, which comprises 3,140 nodes (representing cities) and 67,663 edges (representing airline routes). To obtain Inter-QNet network graphs of various scales, we extract subgraphs from this global dataset by uniformly sampling edges corresponding to international flight routes. In this context, each city is treated as a  node, each country is treated as a QNet, and the international airline routes between cities belonging to different countries serve as inter-links. A sampled subgraph of international airline routes between cities in European Union countries is shown in Fig.~\ref{fig:05} for illustration. Both the synthetic and real-world datasets are Inter-QNet graphs and do not contain control-nodes, so we synthetically generate a control-node system by assigning one control-node to each QNet. 

We simulate traffic by randomly choosing source-destination pairs, where the source can be any node in the network, and the destination is one of the propagated addresses. Since we focus on the inter-domain scenario, we exclude all the requests where the source and the destination are within the same QNet. \textcolor{black}{For consistency with current experimental demonstrations, we fix the entanglement-graph size to 50 nodes, a scale comparable to recent multipartite state-preparation experiments~\cite{CaoWuChe-23}. To rigorously account for structural heterogeneity across graph instances, all reported results are averaged over 1,000 independently generated graph instances.}

\subsection{\textcolor{black}{Evaluation Performance}}
\label{sec:5.2}

\subsubsection{Performance of MEC}
\label{sec:5.2.1}

\textcolor{black}{To evaluate the performance of our routing scheme, we consider two metrics that characterize key routing properties: hop-counts and throughput.}

\begin{figure}[t]
    \centering

    \subfloat[Hop in Real Data Network]{%
        \includegraphics[width=0.485\linewidth]{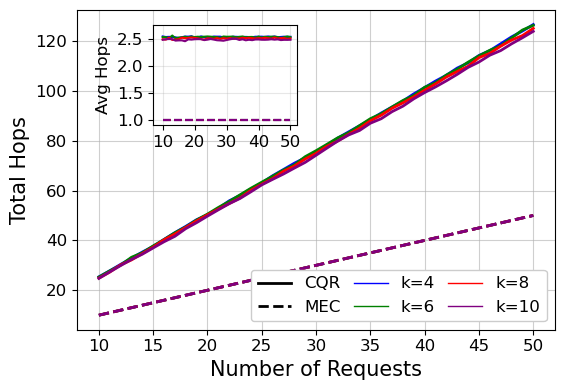}%
        \label{fig:x07.a}
    }
    \hfill
    \subfloat[Hop in Synthetic Network]{%
        \includegraphics[width=0.485\linewidth]{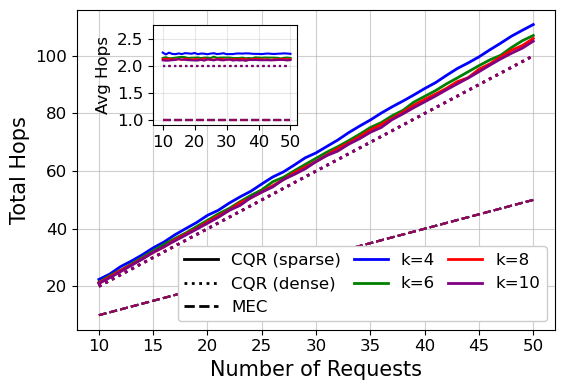}%
        \label{fig:x07.b}
    }

    \caption{Total and average hop-counts for CQR and MEC under (a) real data network and (b) synthetic network with different densities.}
    \label{fig:x07}
    \vspace{0.5ex}
    \hrulefill
\end{figure}

\setcounter{footnote}{5}

\paragraph{\textcolor{black}{{Hop-count}}}
\textcolor{black}{We use Dijkstra's algorithm -- a widely adopted shortest-path method -- as a representative CQR benchmark. To ensure a meaningful and conservative comparison, we evaluate MEC under a strict resource constraint, namely a routing-qubit footprint (RQF) of one qubit per node for both control nodes $c_i$ and ordinary nodes $v_i$. In contrast, CQR requires intermediate nodes to perform Bell-state measurements (BSM) for entanglement swapping, implying typically at least two qubits referring to two distinct EPR links. Therefore, in our simulations we relax the routing-stage qubit constraint (i.e. the routing-qubit footprint) for CQR, by allowing nodes $v_i$ to use at least two qubits and by additionally permitting control nodes $c_i$ to participate in the routing. Under these assumptions, an S--D request may be realized either through direct inter-QNet mediation (two hops via an intermediate QNet) or through control-node-assisted routing (three hops via the control nodes of the source and destination QNets). This relaxation favours shortest-path discovery for CQR, thereby providing a conservative baseline for comparison.}

\textcolor{black}{Let $\bar{h}$ denote the average hop-count per remote request on the original controlled inter-QNet graph.
As shown in Fig.~\ref{fig:x07}, the average hop-count per request under CQR ranges from $2.0$ to $2.5$, whereas under MEC it remains constant to $1$, yielding a hop-count reduction of up to 60$\%$.  As the number of requests increases, the aggregate hop-count increases accordingly and is primarily influenced by the network density rather than by the number of QNets $k$.}

\begin{figure*}[t]
\noindent
   \begin{minipage}[t]{\textwidth}
        \centering
        \resizebox{\textwidth}{!}{
            \input{Figures/timelegend}
        }
        \label{fig:x06-e1}
    \end{minipage}%
    \vspace{-0.5cm}

    \subfloat[$\lambda \geq T^M_P + T^M_R$ \label{fig:x06-a}]{
        \resizebox{0.45\textwidth}{!}{\input{Figures/time01}}
    }
    \hfill
    \subfloat[$T^M_R \leq \lambda < T^M_P + T^M_R$ \label{fig:x06-b}]{
        \resizebox{0.45\textwidth}{!}{\input{Figures/time02}}
    }

    \subfloat[$\lambda \geq T^B_P + T^B_R$ \label{fig:x06-c}]{
        \resizebox{0.45\textwidth}{!}{\input{Figures/time03}}
    }
    \hfill
    \subfloat[$\lambda < T^B_P + T^B_R$ \label{fig:x06-d}]{
        \resizebox{0.45\textwidth}{!}{\input{Figures/time04}}
    }

    \caption{\textcolor{black}{Timing diagrams illustrating possible regimes of the request interval $\lambda$. Dark-blue and light-blue blocks denote entanglement preparation time in MEC and CQR, respectively. Dark-orange and light-orange blocks denote routing time in MEC and CQR, respectively. The upper row corresponds to MEC, where resource preparation $T^{M}_{P}$ may start before the request arrival $t_i$, under the adopted proactive strategy. The bottom row corresponds to CQR, where preparation $T^{B}_{P}$ begins only after $t_i$, since EPR pairs are typically generated and distributed on demand according to specific requests. The relations between $\lambda$, the preparation time and the routing time determine how many service cycles can be completed within a window.}
    }
    \label{fig:x06}
    \vspace{0.5ex}
    \hrulefill
\end{figure*}
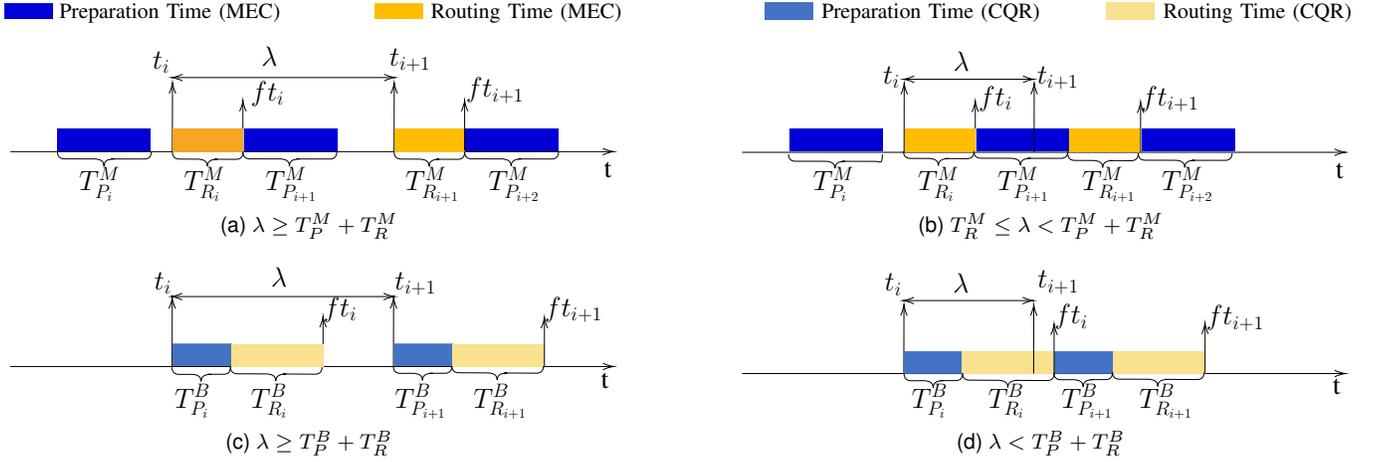

\begin{figure}[t]
    \centering

    \subfloat[Average parallel requests $\bar{r}$ completed per \textcolor{black}{cycle}]{%
        \includegraphics[width=\linewidth]{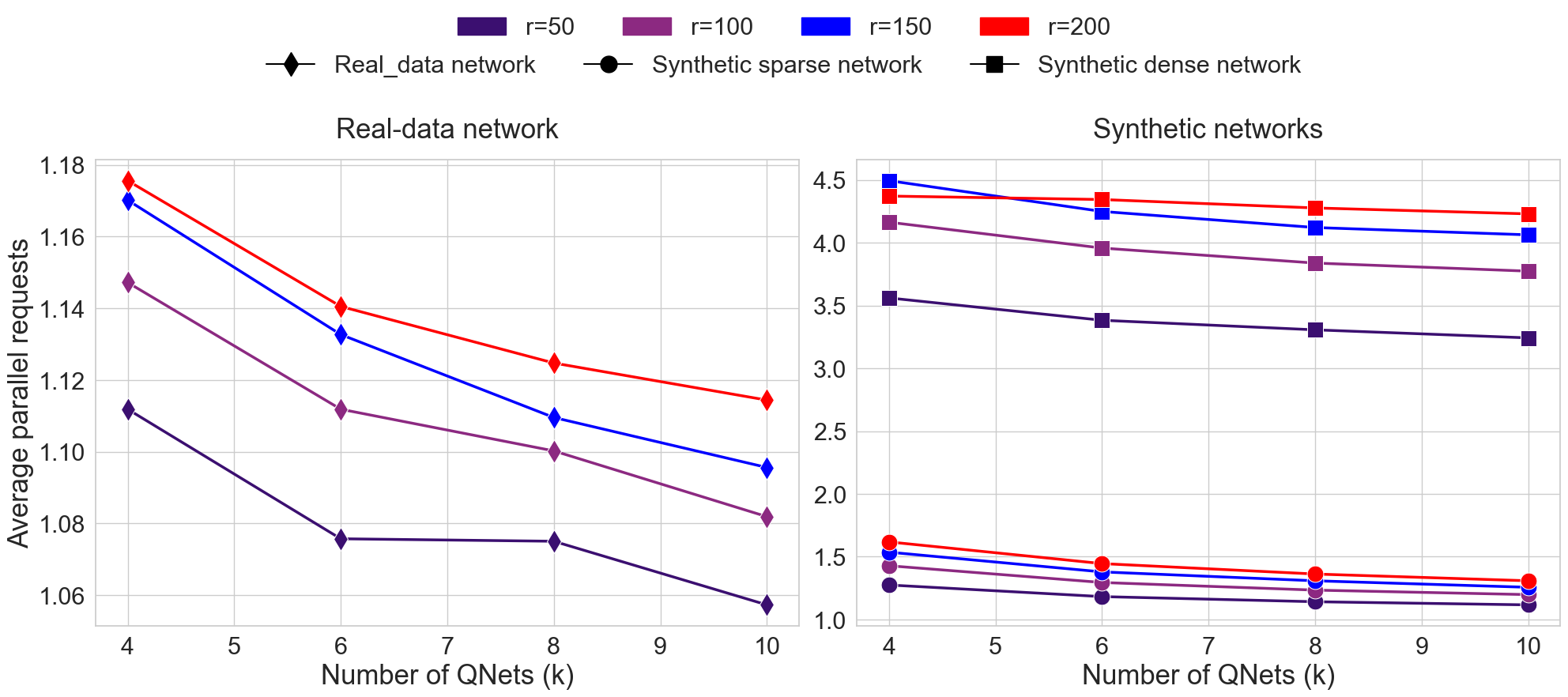}%
        \label{fig:x09.a}
    }

    \vspace{0.5em}  

    \subfloat[Average \textcolor{black}{cycles} $\rho$ required to complete varied requests]{%
        \includegraphics[width=\linewidth]{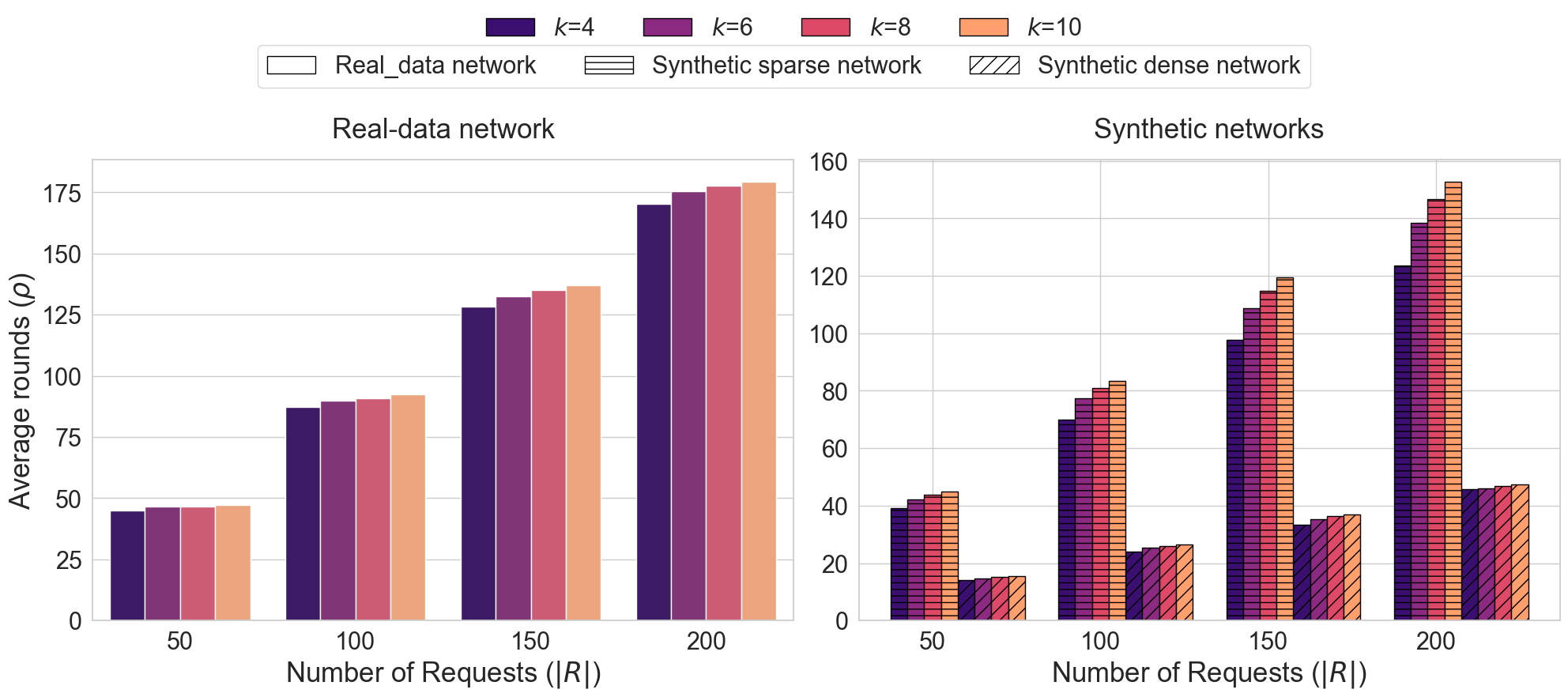}%
        \label{fig:x09.b}
    }

    \caption{Performance of the Dynamic Parallel Pairs (DP) algorithm under varying \textcolor{black}{QNet numbers} and request volumes.}
    \label{fig:x09}
    \vspace{0.5ex}
    \hrulefill
\end{figure}

\paragraph{\textcolor{black}{{Throughput}}}
\textcolor{black}{We define network throughput~\cite{IllCacMan-21} as the number of successfully established end-to-end EPR pairs within a fixed time window. Requests are released in rounds: at the beginning of round $i$, a batch of requests arrives at time $t_i$ and the inter-arrival time is $\lambda = t_{i+1}-t_i$. Over each interval of length $\lambda$,  the network executes service cycles composed of two stages: (i) resource-state preparation and (ii) routing. Accordingly, the throughput is characterized through the preparation time $T_P$ and the routing time $T_R$.}

\textcolor{black}{For MEC, we denote:}
\begin{itemize}
    \item \textcolor{black}{$T^M_{P}$: preparation time per cycle of the controlled inter-QNet resource state, including entanglement generation and distribution;}
    \item \textcolor{black}{$T^M_{R}$: routing time per cycle due to local Pauli measurements\footnote{\label{ft:TR}\textcolor{black}{MEC involves successive $X$-measurement on control-node system and $Z$-measurement on non-target nodes, while CQR $T^B_{R}$ includes successive BSMs on immediate nodes. For the sake of fairness, we assume these measurements can be performed in parallel across nodes and therefore record a single measurement-time contribution per cycle. For more details about Pauli measurement please refer to \cite{MazCalCac-25}.}}.}
\end{itemize}

\textcolor{black}{For CQR, we denote:}
\begin{itemize}
    \item \textcolor{black}{$T^B_{P}$: preparation time per cycle to generate and distribute the $\bar{h}$ EPR pairs required for a single request along a routing path with average hop-count $\bar{h}$;}
    \item \textcolor{black}{$T^B_{R}$: routing time per cycle due to BSMs\footref{ft:TR} used in entanglement swapping along the routing path.}
\end{itemize}

\textcolor{black}{The timing behaviour differs fundamentally between MEC and CQR. In MEC, each preparation cycle produces the same graph-state resource, regardless of the specific request set. Consequently, resource preparation can be performed proactively during idle periods before request arrival. When a request batch arrives at time $t_i$, the multipartite resource may already be available, and the system can immediately proceed with the routing stage. In contrast, CQR prepares entanglement on-demand for each request, since the endpoints and the routing path are only determined after the request arrives. Proactively preparing EPR resources in CQR would risk generating resources that do not match future requests, leading to both increased routing-qubit footprint requirements and entanglement waste. The resulting timing regimes are illustrated in Fig.~\ref{fig:x06}. We further note that $T_P^B$ may also include additional control-plane overhead associated with per-round path selection. For simplicity, this overhead is not explicitly illustrated in Fig.~\ref{fig:x06}, which makes the visual comparison favourable with respect to CQR. By contrast, MEC relies on a request-agnostic multipartite resource. Therefore, its distribution stage is less coupled to per-round, request-specific path selection. In any case, distribution-related overheads for both paradigms are captured by the parameters $T_P^M$ and $T_P^B$, whose relative magnitude is platform- and workload-dependent. Our framework keeps both terms explicit to support fair instantiation under different assumptions.}

\textcolor{black}{Let $\bar r$ denote the average number of requests served per cycle (empirically evaluated in Fig.~\ref{fig:x09.a}) -- i.e., the average number of end-to-end EPR pairs delivered per service cycle. Stemming from the different timing behaviour of MEC and CQR, their throughputs -- $F^M$ and $F^B$ respectively -- are given by:}
\textcolor{black}{\begin{equation}
    \label{eq:throughput1}
    F^M= 
    \begin{cases}
        \left\lfloor\dfrac{\lambda - T^M_P}{T^M_P + T^M_R} + 1 \right\rfloor \times \dfrac{\bar{r}}{\lambda}, & \text{if } \ \lambda \geq T^M_P + T^M_R \\[2ex]
        \ \dfrac{\bar{r}}{\lambda},  & \text{if } \ T^M_R \leq \lambda < T^M_P + T^M_R \\[2ex]
        \  0, & \text{if } \ \lambda < T^M_R.
    \end{cases}
\end{equation}}
\textcolor{black}{\begin{equation}
    \label{eq:throughput2}
    F^B= 
    \begin{cases}
         \left\lfloor \dfrac{\lambda}{ T^B_P + T^B_R} \right\rfloor \times \dfrac{1}{\lambda} , & \text{if } \ \lambda \geq T^B_P + T^B_R \\[2ex]
        \ 0, & \text{if } \ \lambda < T^B_P + T^B_R .
    \end{cases} 
\end{equation}}
\textcolor{black}{The detailed explanation about Eqs.~(\ref{eq:throughput1})--(\ref{eq:throughput2}) is provided in App.~\ref{app:throughput}.}

\textcolor{black}{These analytical expressions provide an analytical guideline for evaluating and comparing the performance of the two MEC and CQR routing paradigm. These formulations explicitly capture the dependence of the throughput on network characteristics (e.g., average hop-count), the adopted entanglement distribution strategy, and hardware-dependent operation times, which jointly determine the preparation time $T_P$ and the routing time $T_R$. Given the diversity of quantum hardware platforms -- including superconducting~\cite{CaoWuChe-23,SimBesUst-22}, neutral-atom~\cite{Zha-23}, and trapped-ion systems~\cite{Li-25} -- each characterized by distinct operational constraints and performance regimes, we refrain from assigning specific values to $T_P$ and $T_R$. Instead, we leave these parameters to be instantiated according to the target hardware and application scenario.}

\textcolor{black}{The aforementioned device-agnostic and application-oriented perspective allows readers to set the timing parameters under their specific technological assumptions and performance requirements, thereby avoiding platform-specific bias and ensuring broader applicability of the analysis. In Sec.~\ref{sec:6}, we further discuss the ``MEC--CQR tradeoff'': depending on the platform and provisioning strategy, MEC may incur a more resource-intensive preparation phase to enable more efficient routing phase, whereas relies on request- and path-dependent preparation, which can increase routing-stage complexity and may additionally incur per-round path-selection overhead.}

\begin{figure*}[t]
    \centering

    \subfloat[Processing load and \textcolor{black}{aggregate routing-qubit footprint} on real data sparse networks.]{%
        \includegraphics[width=\linewidth]{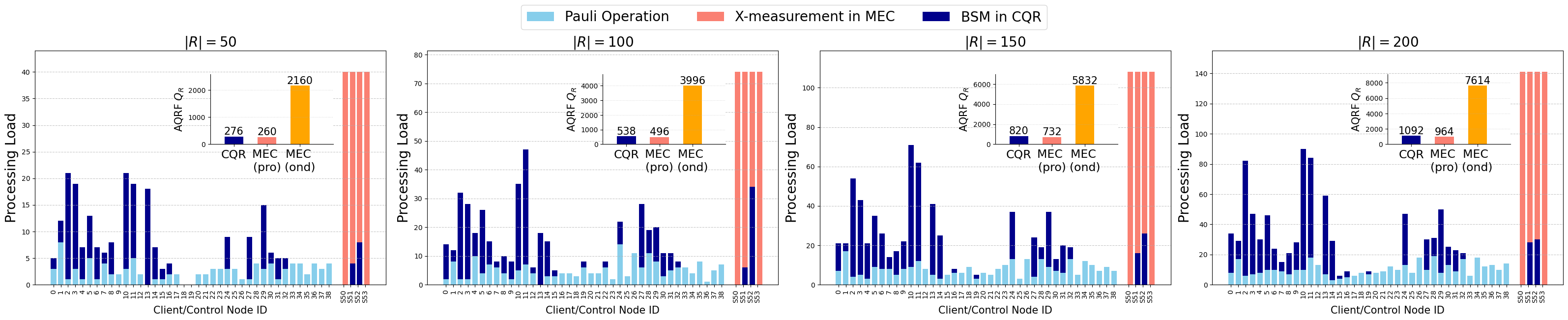}%
        \label{fig:x10.a}
    }

    \vspace{0.5em}  

    \subfloat[Processing load and \textcolor{black}{aggregate routing-qubit footprint} on synthetic dense networks.]{%
        \includegraphics[width=\linewidth]{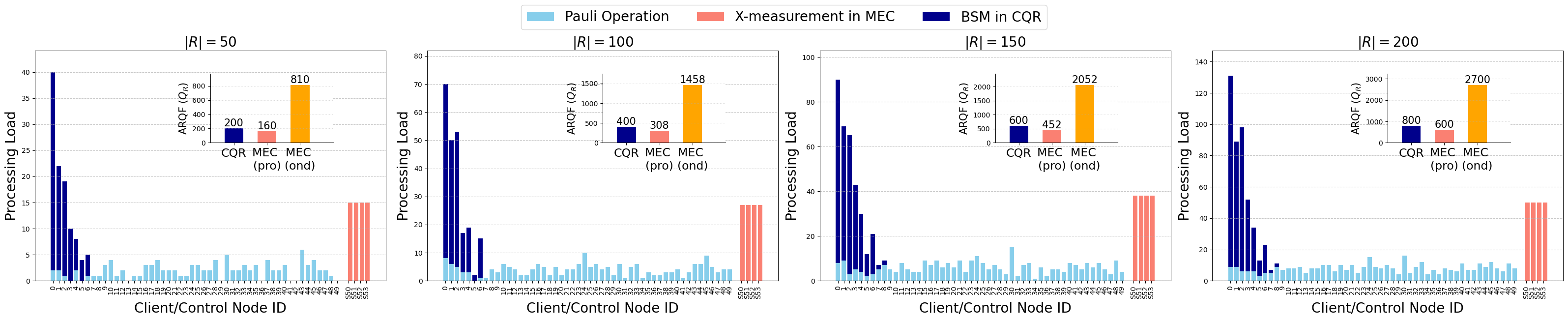}%
        \label{fig:x10.b}
    }

    \caption{Resource allocation analysis under CQR and MEC with DP algorithm, fixing $\textcolor{black}{k'}=4$ and varying $|R|$, evaluated on (a) real data sparse networks and (b) synthetic dense networks. \textcolor{black}{The main plots visualize each node aggregate routing-qubit footprint}, where nodes serving as endpoints (light blue) demand a single qubit for Pauli corrections per request, while CQR repeater nodes (dark blue) demand two qubits per Bell State Measurement. Control nodes in MEC (orange) utilize single qubits for $X$-measurements. Corresponding insets quantify \textcolor{black}{aggregate routing-qubit footprint (ARQF) $Q_R$ for the CQR, proactive MEC, and on-demand MEC approaches.}} 
    \label{fig:x10}
    \vspace{0.5ex}
    \hrulefill
\end{figure*}

\subsubsection{Performance of DP}
\label{sec:5.2.2}

We evaluate the performance of our Dynamic Parallel Pairs (DP) algorithm by measuring \textcolor{black}{parallel processing capability and aggregate routing-qubit  footprint.}

\paragraph{\textcolor{black}{{Parallel Processing Capability}}}
\textcolor{black}{To quantify DP's parallelism, we measure (i) the average number of requests served in parallel per service cycle, denoted by $\bar{r}$, and (ii) the number of service cycles -- denoted by $\rho$ -- required to complete a given request set. Let $R$ denote the set of S--D requests in a batch. Hence by definition $|R| = \rho \cdot \bar{r}$.} Specifically, we randomly select 4, 6, 8 and 10 QNets from a sample network and 
choose 50, 100, 150 and 200 source-destination pairs from these QNets. \textcolor{black}{Applying DP to schedule these requests yields Fig.~\ref{fig:x09.a} and Fig.~\ref{fig:x09.b}, which plots $\bar{r}$ and $\rho$ for different configurations, respectively. The results demonstrate that the DP algorithm consistently completes multiple requests in parallel. Overall, our DP algorithm consistently requires fewer preparation rounds than the total number of requests, with particularly significant gains in dense topology.}

\paragraph{\textcolor{black}{{Aggregate Routing-Qubit Footprint (ARQF)}}}
\textcolor{black}{To quantify routing resource overhead, we evaluate the aggregate routing-qubit footprint (ARQF), denoted as $Q_R$, i.e., the \textit{total} number of entanglement-resource qubits required network-wide during the routing phase to serve a batch of S--D requests.}\\
In CQR, which adopts BSMs \textcolor{black}{during routing phase}, each request requires one \textcolor{black}{qubit} at the source and destination, and at least two \textcolor{black}{qubits} at each intermediate node along the selected path to perform entanglement swapping. \textcolor{black}{Therefore, completing all requests by CQR requires an ARQF of}
\begin{equation}
    \label{eq:consume2}
    Q^B_{\textcolor{black}{R}}= 2|R| + 2\chi, 
\end{equation}
\textcolor{black}{where $\chi$ is total number of intermediate nodes across all requests in CQR.} 

\textcolor{black}{For MEC, ARQF depends on the adopted provisioning strategy. In the following, we consider two representative schemes, namely proactive and on-demand MEC.}

\textcolor{black}{(i) \textit{Proactive MEC:} the multipartite resource is prepared in advance and, per service cycle, one qubit is at every node of the controlled inter-QNet. Therefore, completing all requests over $\rho$ cycles requires an aggregate footprint of
\begin{equation}
    \label{eq:consume_proactive}
    Q^{M,\mathrm{pro}}_{R}= \rho\!\left(k' + \sum_{i=1}^{k}|V_i|\right),
\end{equation}
where $k'$ is the number of control nodes.}

\textcolor{black}{(ii) \textit{On-demand MEC:} if MEC adopts the same request-driven strategy as CQR (i.e., resources are prepared only after requests are known), qubits need only be prepared at the involved S--D nodes, while the control layer contributes an additional footprint scaling with $\rho$. In this case, the ARQF becomes
\begin{equation}
    \label{eq:consume1}
    Q^{M,\mathrm{ond}}_{R}= 2|R| + \rho k'.
\end{equation}}

\textcolor{black}{To further illustrate how DP mitigates load concentration across nodes/control nodes and to quantify ARQF}, in Fig.~\ref{fig:x10} we fix \textcolor{black}{$k'=4$ and vary $|R|$, reporting} the processing load on each node/control node and \textcolor{black}{the corresponding ARQF under CQR and MEC. The results show that CQR distributes requests relatively evenly in sparse networks but can concentrate load on specific intermediate nodes in dense graphs, whereas MEC concentrates routing-stage operations at control nodes while leaving ordinary nodes largely unaffected.} \textcolor{black}{Compared to CQR, when DP is integrated with proactive MEC, the ARQF remains relatively high for the network-wide pre-preparation requirement. In contrast, under the on-demand MEC with DP, the ARQF can be reduced in both sparse and dense networks, achieving performance comparable to or better than CQR.}

\section{Conclusion \textcolor{black}{and Discussion}}
\label{sec:6}

\noindent In this paper, we introduced a Multipartite Entanglement Complementation (MEC) strategy that enables direct entanglement establishment between arbitrary remote source-destination \textcolor{black}{pairs beyond conventional pathfinding. We also developed an accompanying orchestration algorithm to autonomously select and parallelize multiple entanglement requests.} \textcolor{black}{Simulation results indicate that MEC can significantly reduce hop-count and routing-qubit footprint, while preserving scalability across multi-domain quantum networks. Overall, the proposed framework highlights a promising routing-control paradigm that improves resource efficiency and operational autonomy compared to conventional routing approaches}.

\textcolor{black}{
In the following, we discuss some key implications of the proposed framework and outline promising directions for future research.}

\paragraph{\textcolor{black}{Multipartite Entanglement Generation}} 
\textcolor{black}{Although multipartite resource-state generation is not the central focus of the present work, it is important to clarify our resource accounting. The controlled inter-QNet states underlying MEC are characterized in this work in terms of their \textit{logical-qubit} count, corresponding to the number of vertices of the target network state. The actual number of \textit{physical qubits} required for preparing such states is implementation-dependent, as it depends on the adopted generation technique the capabilities of the underlying hardware. In practice, preparing multipartite resource states may entail additional physical-qubit overhead associated with preparation procedures such as micro-cluster generation~\cite{Nie-04} and fusion-based operations~\cite{ThoRusMor-22,ThoRusMpr-24,ThoRusMor-22-2}. Recent work~\cite{MazMigIll-25} on flexible qubit allocation formalizes this distinction, showing that the effective ``granularity'' of each logical system can be adapted to meet architectural and resilience requirements.}

\textcolor{black}{We emphasize that such preparation overhead is not unique to MEC. CQR schemes also face non-negligible physical-qubit consumption due to quantum error correction techniques, such as purification procedures. In this sense, the physical-qubit overhead associated with entanglement generation is a general and vital challenge in quantum networking.}

\textcolor{black}{Finally, ongoing theoretical~\cite{Nie-04,PatPanEng-22,RubMasYel-25} and experimental~\cite{ThoRusMor-22,ThoRusMpr-24,ThoRusMor-22-2} progresses in multipartite entanglement generation suggest that, while resource overhead is unavoidable, scalable preparation of highly multipartite entangled states is becoming increasingly feasible.}

\paragraph{\textcolor{black}{Control Nodes and Entanglement Distribution}} 
\textcolor{black}{The cost of establishing entanglement links generally depends on factors such as physical distance and on the adopted entanglement-generation/distribution strategy supported by the underlying hardware. These considerations relate to a broader co-design problem between ``clustering formation'' and ``control-layer entanglement provisioning'', as recently discussed in quantum-internet architectural frameworks~\cite{rfc-cac,CalCac-26}. Importantly, our MEC framework operates at the routing abstraction level and assumes the availability of entanglement resources, providing the required logical connectivity. Consequently, our analysis does not impose explicit constraints on the physical distances of the links realizing such resources. Within this abstraction, it is useful to distinguish between two types of connections: (i) links between a control node and the nodes it serves, and (ii) entangled links among control nodes.}

\textcolor{black}{\textit{(i) Links between a control node and its served nodes.} These links can be interpreted as \textit{intra-domain} entanglement connections determined by the clustering strategy used to partition the network. In practice, clustering policies typically incorporate physical proximity within an entangling-cost metric. Accordingly, most node–control links are expected to remain short-range and relatively low-cost. Moreover, the distinction between ``nodes'' and ``control nodes'' should be viewed primarily as an architectural abstraction. For instance, a control node can be \textit{elected within each QNet}~\cite{CheIllCac-25} according to an optimization objective, for example, minimizing entanglement cost or improving connectivity. Such variations, while outside the scope of this work, do not affect the routing-level analysis presented here and would only require adapting the node-set definitions introduced in Sec.~\ref{sec:2}.}

\textcolor{black}{\textit{(ii) Entangled links among control nodes.} Establishing entangled links at the control layer is an architectural design choice that can follow principles adopted in modern quantum-internet designs for inter-domain connectivity~\cite{rfc-cac,CalCac-26}. Hierarchical architectures typically rely on a \textit{sparse} set of long-range (high-cost) entangled links among higher-tier nodes, while maintaining predominantly local, low-cost connectivity. As a result, long-distance entanglement is sustained by a limited subset of higher-tier nodes and supported by appropriate mechanisms for long-distance entanglement distribution~\cite{CalCac-26}, rather than requiring a fully connected entanglement topology across all control nodes. This is also consistent with the architectural paradigm of separating control and data planes~\cite{rfc-cac,CalCac-26}. Therefore, the number and placement of control-layer entangled links can be deliberately constrained and optimized at the architectural level. The detailed design and evaluation of such provisioning strategies are beyond the scope of this work, but can be addressed using established architecture-level methodologies~\cite{rfc-cac,CalCac-26}.}

\textcolor{black}{Overall, our analysis is conducted at the routing-level abstraction, under the assumption that the necessary entanglement resources are readily available for use. Concrete implementation choices -- including clustering strategies and entanglement provisioning policies -- can follow existing architectural frameworks~\cite{rfc-cac,CalCac-26}. A comprehensive optimization and quantitative assessment of the associated distance/entangling-cost tradeoffs remains an important direction for future work.}

\paragraph{\textcolor{black}{MEC-CQR Tradeoff}}
\textcolor{black}{The tradeoff between MEC and CQR is rooted in how each paradigm allocates cost between the \textit{preparation} phase and the \textit{routing} phase, which jointly determine throughput via the timing parameters $(T_P,T_R)$ in Eqs.~\eqref{eq:throughput1} and \eqref{eq:throughput2}. Since $T_P$ and $T_R$ depend on network characteristics (e.g., average hop count), the adopted entanglement distribution strategy, and hardware-dependent operation times, the relative performance of MEC and CQR is inherently platform- and workload-dependent, as already discussed in Sec.~\ref{sec:5}. }
\textcolor{black}{On the preparation side, MEC relies on generating and distributing a multipartite resource state whose logical connectivity is request-agnostic. Depending on the platform and provisioning strategy, this can translate into a higher preparation effort. In contrast, CQR prepares entanglement in a request- and path-dependent manner, which can reduce the need for proactive, network-wide resource provisioning but ties preparation to the specific request set and selected paths. As noted in Sec.~\ref{sec:5}, this request-driven nature can also introduce additional overhead for per-round path selection, which may be accounted for within the derived throughput under the chosen timing model.}

\textcolor{black}{On the routing side, the asymmetry reverses: once a multipartite resource is available, MEC serves requests through local single-qubit Pauli-measurements and feed-forward, thereby reducing both operational complexity and routing-qubit footprint at each node. Conversely, CQR relies on path-dependent entanglement swapping along request-specific paths, which increases routing-stage overhead as the network grows.}

\textcolor{black}{Overall, MEC can be viewed as a ``front-loaded'' approach: depending on the platform and provisioning strategy, it may incur a more resource-intensive preparation phase for enabling a more efficient routing phase based primarily on local single-qubit operations. In contrast, CQR follows a more ``just-in-time'' approach based on request- and path-dependent preparation, which can increase routing-stage complexity. Consequently, the relative advantage of MEC versus CQR depends on the interplay between device capabilities, provisioning assumptions, and network scale: MEC may be more attractive when stable pre-distribution and orchestration are feasible, whereas CQR may remain preferable in more resource-constrained or modest settings.}

\paragraph{\textcolor{black}{Operational Errors and Practical Considerations}}
\textcolor{black}{Operational errors are an important practical factor when implementing graph-state–based routing protocols, and in particular when executing the local operations involved in graph-state transitions. While a quantitative analysis, under hardware-specific noise models, is beyond the scope of this work, we highlight here the main implications for MEC and CQR and discuss how such effects can be incorporated in future studies.} 

\textcolor{black}{Current experimental platforms already achieve high gate fidelities. Single- and two-qubit gate fidelities in superconducting and trapped-ion systems typically exceed 99\%~\cite{CaoWuChe-23,Goo-23,Ion-23}, while neutral-atom platforms report two-qubit fidelities around 99.5\%~\cite{EveDolMar-23}. More recent trapped-ion experiments report single-qubit error rates as low as $1.5\times10^{-5}$~\cite{MaLiuPen-23}. These results indicate that, although non-zero, operational errors in multipartite state manipulation are increasingly well controlled in state-of-the-art implementations.}

\textcolor{black}{Regarding a comparison with CQR, an important architectural difference affecting error accumulation arises during the routing stage. In CQR, establishing end-to-end entanglement typically requires a sequence of entanglement-swapping operations along a multi-hop path. As the path length increases, each additional swapping step introduces further two-qubit gates and measurements, leading to accumulated errors. In contrast, MEC leverages a pre-distributed multipartite resource and realizes end-to-end connectivity primarily through local single-qubit measurements (e.g., $X$-measurements) and feed-forward corrections, without requiring a chain of path-dependent swapping operations during routing. Thus, while both frameworks remain subject to noise degradation, MEC may reduce the number of multi-qubit operations in the routing phase that scale with the hop-count, thereby mitigating error accumulation compared to conventional multi-hop swapping–based routing. In this context, it is important to note that graph states belong to the stabilizer-state family, whose error behaviour under Pauli noise can be systematically analysed using the noise stabilizer formalism~\cite{MorDur-23}. This provides a useful foundation for developing explicit error-propagation models and noise-aware routing strategies in future work.
}

\section*{Acknowledgment}
\noindent The authors thank Rajiuddin Sk and Dick Maryopi for proofreading the earlier version of this manuscript.

\appendices

\section{Proof of Lemma~\ref{lem:x01}}
\label{app:lem:x01}
Assume $\ket{ CG_{n_1,\dots,n_k} }$ is defined as in Eq.~\eqref{eq:def:x03}. We aim to prove that $\ket{ \bar{G}_{n_1,\dots,n_k} }$ can be obtained from it. For clarity, we adhere to the following notation, which refers specifically to the initial associated graph $CG{n_1,\dots,n_k}$:

For a node $v_i$ in $CG{n_1,\dots,n_k}$, let $N^i$ and $\overline{N}^{i}$ denote its sets of adjacent and non-adjacent (remote) nodes in initial $CG_{n_1,\dots,n_k}$, respectively. 
More precisely, we define $N^i_l \eqdef N^i \cap V_l$ and $\overline{N}^i_l \eqdef \overline{N}^i \cap V_l$ as the adjacent and remote nodes of $v_i$ within the QNet $V_l$.

Furthermore, for any two QNets $V_a$ and $V_b$, the set of initial inter-links $E(V_a, V_b)$ and the set of complement inter-links $\bar{E}(V_a, V_b)$ between them are defined as:
\begin{align}    
    \label{eq:app:lem:x01.00}
    & E(V_a, V_b) \eqdef   \bigcup_{v_{i} \in V_a} \big( v_{i}, (v_{j} \in {N}^{i}) \cap V_b \big) \\
    \label{eq:app:lem:x01.01}
    & \bar{E}(V_a, V_b) \eqdef   \bigcup_{v_{i} \in V_a} \big( v_{i}, (v_{j} \in \overline{N}^{i}) \cap V_b \big)
\end{align}

The proof constructively follows by performing Pauli-$x$ measurement 
on control-nodes $c_1, \dots, c_{k'}$ in order. 
\begin{itemize}
    \item[i)] Pauli-$x$ measurement on the control-nodes $c_1$ with the arbitrary neighbor nodes $v_{1.i}$ as $k_0$\cite{CheIllCac-25}. It is equivalent to perform the sequence of graph operations $\tau_{v_{1.i}} \left( \tau_{c_1}\big(\tau_{v_{1.i}}(CG_{n_1,\dots,n_k})\big)-c_1 \right)$. For clarify, let's use $v_{1.1}$ as $k_0$.  This yields to $G^1$ as:
        \begin{align}
            \label{eq:app:lem:x01.1}
            & G^1  = \tau_{v_{1.1}} \left( \tau_{c_1}\big(\tau_{v_{1.1}}(CG_{n_1,\dots,n_k})\big)-c_1 \right)= \nonumber \\
            & =( \big( \bigcup_{i=1}^k V_i \big) \cup \big( \bigcup_{j=2}^{k'} c_j \big), \bigcup_{ \substack{ 2 \leq j, l \leq k' \\ j \neq l} }(c_j,c_l)   \cup \nonumber \\
            & \cup \big( \bigcup_{j=2}^{k'}(c_j, v_{1.1}) \big) \cup \big( \bigcup_{ \substack{ 2 \leq j \leq k' \\ 2 \leq l \leq k \\ j \neq l} }(c_j, N_l^{1.1}) \big) \cup \nonumber \\
            & \cup \big( \bigcup_{ \substack{ j=2 } }^k(c_j, \bar{N}_j^{1.1}) \big)  \cup \big( \bigcup_{i=2}^{n_1} (v_{1.1}, v_{1.i}) \big) \cup \nonumber \\
            & \cup \big( \bigcup_{ \substack{ 2 \leq i \leq n_1 \\ 2 \leq j \leq k } } \bar{E}(v_{1.i}, N_j^{1.1}) \big) \cup \big( \bigcup_{ \substack{ 2\leq i,j \leq k \\ i\ne j } }  E(V_i, V_j) \big) \big)
        \end{align}        
    \item[ii)] Pauli-$x$ measurement on the control-nodes $c_2$ by choosing again $v_{1.1}$ as $k_0$ (which belongs now to $N(c_2)$ as a consequence of the first Pauli-$x$ measurement). It is equivalent to perform the sequence of graph operations $\tau_{v_{1.1}} \left( \tau_{c_2}\big(\tau_{v_{1.1}}(G^1)\big)-c_2 \right)$. This yields to $G^2$:
        \begin{align}
            \label{eq:app:lem:x01.2}
            & G^2  = \tau_{v_{1.1}} \left( \tau_{c_2}\big(\tau_{v_{1.1}}(G^1)\big)-c_2 \right)= \nonumber \\
            & \big(  \big( \bigcup_{i=1}^k V_i \big) \cup \big( \bigcup_{j=3}^{k'} c_j \big), \bigcup_{ \substack{ 3 \leq j,l \leq k' \\ j \neq l} } (c_j,c_l) \cup  \nonumber \\
            &\cup \big( \bigcup_{j=3}^{k'}\bigcup_{l=1}^2(c_j, V_l) \big)  \cup \big( \bigcup_{j=3}^{k}(c_j, V_j)  \big) \cup   \nonumber \\ 
            &\cup \big( \bigcup_{ \substack{ 1 \leq i,j \leq 2 \\ i \neq j} } \bar{E}(V_i,V_j) \big) \cup \big( \bigcup_{ \substack{ 1 \leq i \leq k,\\ 3 \leq j \leq k, \\ i \ne j } } E(V_i, V_j) \big) \big) 
        \end{align}
\end{itemize}
We can observe that, compared to the original graph $CG_{n_1, \dots,n_k}$, the graph $G^2$ results from transformations: (1) remove  $c_1, c_2$ and corresponding edges; (2) connect all  nodes in $V_1, V_2$ to each remaining control node $c_i$ for $i \geq 3$; (3) complement Inter-QNet links between $V_1$ and $V_2$.

We define a family of graphs $G^d$, derived from the original Inter-QNet graph $CG_{n_1, \dots,n_k}$, where $d$ $(d\leq k'-2)$ denotes the number of measured control nodes. The construction of $G^d$ follows these rules:
\begin{enumerate}
    \item Control-node System: 
    \begin{enumerate}[label=.]
        \item The $d$ measured control nodes $c_1, \dots, c_d$ are removed.
        \item The remaining $(k'-d)$ control nodes $c_{d+1}, \dots, c_{k'}$ are fully connected.
    \end{enumerate}    
    \item Node-Control Interface:
    \begin{enumerate}[label=.]
        \item Each node in $V_i$ for $d<i\leq k$ connects only to its  control node $c_i$, as in the original graph.
        \item Each node in $V_j$ for $j\leq d$ connects to all remaing control nodes $c_i$ where $d<i \leq k'$.
    \end{enumerate}     
    \item Inter-QNet Links: The subgraph induced by $V_1 \cup \dots \cup V_d$ has complement inter-links, i.e., all inter-links among these node sets are complemented relative to the original graph.
\end{enumerate}
Based on the above construction, we define the corresponding graph state $\ket{G^d}$ associated with the graph $G^d$ as:
\begin{align}
    \label{eq:app:lem:x01.3}
    & G^d  = \big(  \big( \bigcup_{i=1}^k V_i \big) \cup \big( \bigcup_{j=d+1}^{k'} c_j \big), \bigcup_{ \substack{ d+1 \leq j, l \leq k' \\ j \neq l} }(c_j,c_l) \cup  \nonumber \\
    &\cup \big( \bigcup_{j=d+1}^{k'}\bigcup_{l=1}^d(c_j, V_l) \big)  \cup \big( \bigcup_{j=d+1}^k(c_j, V_j)  \big) \cup   \nonumber \\ 
    &\cup \big( \bigcup_{ \substack{ 1 \leq i,j \leq d \\ i \neq j} } \bar{E}(V_i,V_j) \big) \cup \big( \bigcup_{ \substack{ 1 \leq i \leq k\\ d+1 \leq j \leq k \\ i \ne j } } E (V_i, V_j) \big) \big)
\end{align}

After two times of Pauli-$x$ measurement on $c_{d+1}, c_{d+2}$ sequentially, with both $k_0=v_{1.1}$, it collapses to a graph state $\ket{G^{d+2}}$, with corresponding graph $G^{d+2}$ as:
\begin{align}
    \label{eq:app:lem:x01.4}
    & G^{d+2}  = \big(  \big( \bigcup_{i=1}^k V_i \big) \cup \big( \bigcup_{j=d+3}^{k'} c_j \big), \bigcup_{ \substack{ d+3 \leq j, l \leq k' \\ j \neq l} }(c_j,c_l) \cup  \nonumber \\
    &\cup \big( \bigcup_{j=d+3}^{k'}\bigcup_{l=1}^{d+2}(c_j, V_l) \big)  \cup \big( \bigcup_{j=d+3}^k(c_j, V_j)  \big) \cup   \nonumber \\ 
    &\cup \big( \bigcup_{ \substack{ 1 \leq i, \,j \leq d+2 \\ i \neq j} } \bar{E}(V_i,V_j) \big) \cup \big( \bigcup_{ \substack{ 1 \leq i \leq k,\\ d+3 \leq j \leq k \\ i \ne j } } E(V_i, V_j) \big) \big)
\end{align}

Since the derivation process is analogous to steps i), ii), we omit the details for brevity. Starting from $G^2$, successive Pauli-$x$ measurements on control nodes $c_3, \dots, c_{k'-2}$ all with same $k_0=v_{1.1}$, yield $G^{k'-2}$ (i.e., $d=k'-2$) as described in Eq.~\eqref{eq:app:lem:x01.3}. Proceeding with two additional Pauli-$x$ measurements on control nodes $c_{k'-1}$ and $c_{k'}$, again with $k_0=v_{1.1}$, results in a graph state $\ket{G^{k'}}$, corresponding to the graph $G^{k'}$:
\begin{equation}   
    \label{eq:app:lem:x01.8}
    G^{k'}  = \big(  \bigcup_{i=1}^k V_i , \bigcup_{ \substack{ 2 \leq i,j \leq k \\ i\neq j } } \bar{E}(V_i, V_j) \big) = \bar{G}_{n_1,\dots,n_k}
\end{equation}

Thus the proof follows.

\section{Proof of Theorem~\ref{theo:x01},~\ref{theo:x02} and~\ref{theo:x03}}
\label{app:theo:x01,x02,x03}

For Theorem~\ref{theo:x01}, Alg.~\ref{alg:03} performs parallel-pair searching for $|\mathcal{E}|$ inter-links sequentially. In each round, it checks $E^i \in \mathcal{E}$ against all $|E|$ inter-links in $G_{n_1, \dots, n_k}$ based on the parallel-pair condition in~\eqref{eq:def:x04} (Line 3-10). This yields an $O(|\mathcal{E}|\cdot|E|)$ time complexity.

For Theorem~\ref{theo:x02}, Alg.~\ref{alg:02}  executes a $O(|\mathcal{E}|)$ loop operation while dominantly relying on Algorithm~\ref{alg:03}, resulting in the same $O(|\mathcal{E}|\cdot|E|)$ complexity.

For Theorem~\ref{theo:x03}, Algorithm~\ref{alg:01} dominates with its $O(|R|)$ main \texttt{\textsc{while}} loop iterations, each performing $O(|R|\cdot|\bar{E}|)$ operations via Alg.~\ref{alg:03}, yielding total $O(|R|^2\cdot|\bar{E}|)$ time.

Hence, all the proposed algorithms run in polynomial time.

\section{\textcolor{black}{Details on Throughput in MEC and CQR}}
\label{app:throughput}

\paragraph{\textcolor{black}{{Throughput of MEC in \eqref{eq:throughput1}}}}

\begin{itemize}
    \item \textcolor{black}{Case (a1): $\lambda \geq T^M_P + T^M_R$.} \\ 
    \textcolor{black}{This regime is illustrated in Fig.~\ref{fig:x06-a}. Under the proactive MEC setting, the multipartite resource can be prepared in advance. Hence, upon request arrival at $t_i$, the system immediately executes the routing stage and the completion time satisfies $ft_i=t_i+T^M_R$.}  \textcolor{black}{If $\lambda$ is sufficiently large, MEC can complete the routing stage and also re-prepare the resource for subsequent cycles within the same interval. Therefore, the number of MEC cycles $n^M$ that can be executed within $\lambda$ is:
    \begin{equation}
    \label{eq:FMa}
    n^M = \lfloor \frac{\lambda -T^M_P }{ T^M_P + T^M_R}  +1 \rfloor.
    \end{equation}}
    \textcolor{black}{Since each cycle serves on average $\bar r$ requests, \eqref{eq:throughput1} follows.}
    \item \textcolor{black}{Case (a2): $T^M_R \leq \lambda < T^M_P + T^M_R$.} \\
    \textcolor{black}{This regime is illustrated in Fig.~\ref{fig:x06-b}. In this case, only one MEC cycle can be completed within the interval, but there is not enough time to fully re-prepare the resource within the same window. As a result, across consecutive windows the system alternates between windows in which a prepared resource is available (one completion) and windows in which resource preparation occupies the interval (no completion). The average number of completed cycles over windows of length $\lambda$ converges to a constant value, which yields the throughput expression in \eqref{eq:throughput1}.}

    \item \textcolor{black}{Case (a3): $\lambda < T^M_R$.} \\
    \textcolor{black}{In this regime, no request can be completed within the window of lenght $\lambda$, i.e., $F^M = 0$. Thus \eqref{eq:throughput1} is obtained again.}
\end{itemize}

\paragraph{\textcolor{black}{Throughput of CQR in \eqref{eq:throughput2}}}

\begin{itemize}
    \item \textcolor{black}{Case (b1): $\lambda \geq T^B_P + T^B_R$.} \\
    \textcolor{black}{ This regime is illustrated in Fig.~\ref{fig:x06-c}. In CQR, preparation starts only after request arrival. If $\lambda$ is sufficiently large, the number of CQR cycles $n^B$ that can be executed within an interval of length $\lambda$ is:
    \begin{equation}
        \label{eq:FBc}
        n^B = \lfloor \frac{\lambda }{ T^B_P + T^B_R} \rfloor.
    \end{equation} }
    \textcolor{black}{Since in our CQR timing model, each cycle serves a single request, \eqref{eq:throughput2} follows.}
    \item \textcolor{black}{Case (b2): $\lambda < T^B_P + T^B_R$.} \\
    \textcolor{black}{This regime is illustrated in Fig.~\ref{fig:x06-d}. No request can be completed within the window, i.e., $F^B = 0$, which corresponds to the second case of \eqref{eq:throughput2}.}
\end{itemize}

\bibliographystyle{IEEEtran}
\bibliography{biblio.bib}

\end{document}

%% file: Figures/timelegend.tex
\tikzset{every picture/.style={line width=0.75pt}} 

\begin{tikzpicture}[x=0.75pt,y=0.75pt,yscale=-1,xscale=1]

\draw  [draw opacity=0][fill={rgb, 255:red, 0; green, 0; blue, 214 }  ,fill opacity=1 ] (2.14,1701.67) -- (62.33,1701.67) -- (62.33,1720.67) -- (2.14,1720.67) -- cycle ;

\draw  [draw opacity=0][fill={rgb, 255:red, 255; green, 189; blue, 0 }  ,fill opacity=1 ] (460.47,1701.67) -- (519.81,1701.67) -- (519.81,1720.33) -- (460.47,1720.33) -- cycle ;
\draw  [draw opacity=0][fill={rgb, 255:red, 68; green, 114; blue, 196 }  ,fill opacity=1 ] (941.97,1699.83) -- (1001.17,1699.83) -- (1001.17,1721.17) -- (941.97,1721.17) -- cycle ;

\draw  [draw opacity=0][fill={rgb, 255:red, 249; green, 225; blue, 141 }  ,fill opacity=1 ] (1363.97,1700.83) -- (1423.17,1700.83) -- (1423.17,1722.17) -- (1363.97,1722.17) -- cycle ;

\draw (69.33,1700) node [anchor=north west][inner sep=0.75pt]  [font=\huge] [align=left] {Preparation Time (MEC)};
\draw (528,1700) node [anchor=north west][inner sep=0.75pt]  [font=\huge] [align=left] {Routing Time (MEC)};
\draw (1010.5,1700) node [anchor=north west][inner sep=0.75pt]  [font=\huge] [align=left] {Preparation Time (CQR)};
\draw (1432.5,1700) node [anchor=north west][inner sep=0.75pt]  [font=\huge] [align=left] {Routing Time (CQR)};

\end{tikzpicture}

%% file: Figures/time01.tex
\tikzset{every picture/.style={line width=0.75pt}} 

\begin{tikzpicture}[x=0.75pt,y=0.75pt,yscale=-1,xscale=1]

\draw    (60.29,141.33) -- (657.06,141.33) ;
\draw [shift={(659.06,141.33)}, rotate = 180] [color={rgb, 255:red, 0; green, 0; blue, 0 }  ][line width=0.75]    (10.93,-3.29) .. controls (6.95,-1.4) and (3.31,-0.3) .. (0,0) .. controls (3.31,0.3) and (6.95,1.4) .. (10.93,3.29)   ;
\draw    (441,141.33) -- (441.13,74.76) ;
\draw [shift={(441.13,72.76)}, rotate = 90.11] [color={rgb, 255:red, 0; green, 0; blue, 0 }  ][line width=0.75]    (10.93,-3.29) .. controls (6.95,-1.4) and (3.31,-0.3) .. (0,0) .. controls (3.31,0.3) and (6.95,1.4) .. (10.93,3.29)   ;
\draw    (221,141.33) -- (221.13,74.76) ;
\draw [shift={(221.13,72.76)}, rotate = 90.11] [color={rgb, 255:red, 0; green, 0; blue, 0 }  ][line width=0.75]    (10.93,-3.29) .. controls (6.95,-1.4) and (3.31,-0.3) .. (0,0) .. controls (3.31,0.3) and (6.95,1.4) .. (10.93,3.29)   ;
\draw   (107,141.5) .. controls (106.99,146.17) and (109.32,148.5) .. (113.99,148.51) -- (143.15,148.56) .. controls (149.82,148.57) and (153.14,150.91) .. (153.13,155.58) .. controls (153.14,150.91) and (156.48,148.59) .. (163.15,148.6)(160.15,148.59) -- (193.09,148.65) .. controls (197.76,148.66) and (200.1,146.33) .. (200.11,141.66) ;
\draw    (222.13,67.76) -- (440.13,67.76) ;
\draw [shift={(442.13,67.76)}, rotate = 180] [color={rgb, 255:red, 0; green, 0; blue, 0 }  ][line width=0.75]    (10.93,-3.29) .. controls (6.95,-1.4) and (3.31,-0.3) .. (0,0) .. controls (3.31,0.3) and (6.95,1.4) .. (10.93,3.29)   ;
\draw [shift={(220.13,67.76)}, rotate = 0] [color={rgb, 255:red, 0; green, 0; blue, 0 }  ][line width=0.75]    (10.93,-3.29) .. controls (6.95,-1.4) and (3.31,-0.3) .. (0,0) .. controls (3.31,0.3) and (6.95,1.4) .. (10.93,3.29)   ;
\draw    (291,140.33) -- (291,92.33) ;
\draw [shift={(291,90.33)}, rotate = 90] [color={rgb, 255:red, 0; green, 0; blue, 0 }  ][line width=0.75]    (10.93,-3.29) .. controls (6.95,-1.4) and (3.31,-0.3) .. (0,0) .. controls (3.31,0.3) and (6.95,1.4) .. (10.93,3.29)   ;
\draw   (441.11,141.66) .. controls (441.08,146.33) and (443.4,148.67) .. (448.07,148.69) -- (465.78,148.78) .. controls (472.45,148.81) and (475.77,151.16) .. (475.75,155.83) .. controls (475.77,151.16) and (479.11,148.85) .. (485.78,148.88)(482.78,148.86) -- (503.97,148.97) .. controls (508.64,148.99) and (510.98,146.67) .. (511,142) ;
\draw  [draw opacity=0][fill={rgb, 255:red, 245; green, 166; blue, 35 }  ,fill opacity=1 ] (221,118.5) -- (291,118.5) -- (291,141.33) -- (221,141.33) -- cycle ;
\draw  [draw opacity=0][fill={rgb, 255:red, 255; green, 189; blue, 0 }  ,fill opacity=1 ] (441,118.5) -- (511,118.5) -- (511,141.33) -- (441,141.33) -- cycle ;

\draw  [draw opacity=0][fill={rgb, 255:red, 0; green, 0; blue, 214 }  ,fill opacity=1 ] (292,118.5) -- (384.48,118.5) -- (384.48,140.33) -- (292,140.33) -- cycle ;
\draw  [draw opacity=0][fill={rgb, 255:red, 0; green, 0; blue, 214 }  ,fill opacity=1 ] (107,118.5) -- (199.48,118.5) -- (199.48,140.33) -- (107,140.33) -- cycle ;
\draw  [draw opacity=0][fill={rgb, 255:red, 0; green, 0; blue, 214 }  ,fill opacity=1 ] (512,118.5) -- (604.48,118.5) -- (604.48,140.33) -- (512,140.33) -- cycle ;

\draw    (511,140.33) -- (511,92.33) ;
\draw [shift={(511,90.33)}, rotate = 90] [color={rgb, 255:red, 0; green, 0; blue, 0 }  ][line width=0.75]    (10.93,-3.29) .. controls (6.95,-1.4) and (3.31,-0.3) .. (0,0) .. controls (3.31,0.3) and (6.95,1.4) .. (10.93,3.29)   ;

\draw   (291,141.5) .. controls (290.99,146.17) and (293.32,148.5) .. (297.99,148.51) -- (327.15,148.56) .. controls (333.82,148.57) and (337.14,150.91) .. (337.13,155.58) .. controls (337.14,150.91) and (340.48,148.59) .. (347.15,148.6)(344.15,148.59) -- (377.09,148.65) .. controls (381.76,148.66) and (384.1,146.33) .. (384.11,141.66) ;
\draw   (511,141.5) .. controls (510.99,146.17) and (513.32,148.5) .. (517.99,148.51) -- (547.15,148.56) .. controls (553.82,148.57) and (557.14,150.91) .. (557.13,155.58) .. controls (557.14,150.91) and (560.48,148.59) .. (567.15,148.6)(564.15,148.59) -- (597.09,148.65) .. controls (601.76,148.66) and (604.1,146.33) .. (604.11,141.66) ;
\draw   (221.11,141.66) .. controls (221.08,146.33) and (223.4,148.67) .. (228.07,148.69) -- (245.78,148.78) .. controls (252.45,148.81) and (255.77,151.16) .. (255.75,155.83) .. controls (255.77,151.16) and (259.11,148.85) .. (265.78,148.88)(262.78,148.86) -- (283.97,148.97) .. controls (288.64,148.99) and (290.98,146.67) .. (291,142) ;

\draw (648,150.33) node [anchor=north west][inner sep=0.75pt]  [font=\huge] [align=left] {t};
\draw (308,35) node [anchor=north west][inner sep=0.75pt]  [font=\huge]  {$\lambda$};
\draw (200,40.33) node [anchor=north west][inner sep=0.75pt]  [font=\huge]  {$t_{i}$};
\draw (435,37.33) node [anchor=north west][inner sep=0.75pt]  [font=\huge]  {$t_{i+1}$};
\draw (297.5,70) node [anchor=north west][inner sep=0.75pt]  [font=\huge]  {$ft_{i}$};
\draw (127,155) node [anchor=north west][inner sep=0.75pt]  [font=\huge]  {$T_{P_{i}}^{M}$};
\draw (312,155) node [anchor=north west][inner sep=0.75pt]  [font=\huge]  {$T_{P_{i+1}}^{M}$};
\draw (533,155) node [anchor=north west][inner sep=0.75pt]  [font=\huge]  {$T_{P_{i+2}}^{M}$};
\draw (231,155) node [anchor=north west][inner sep=0.75pt]  [font=\huge]  {$T_{R_{i}}^{M}$};
\draw (451,155) node [anchor=north west][inner sep=0.75pt]  [font=\huge]  {$T_{R_{i+1}}^{M}$};
\draw (511,66.33) node [anchor=north west][inner sep=0.75pt]  [font=\huge]  {$ft_{i+1}$};

\end{tikzpicture}

%% file: Figures/time02.tex
\tikzset{every picture/.style={line width=0.75pt}} 

\begin{tikzpicture}[x=0.75pt,y=0.75pt,yscale=-1,xscale=1]

\draw    (61.67,382) -- (658.44,382) ;
\draw [shift={(660.44,382)}, rotate = 180] [color={rgb, 255:red, 0; green, 0; blue, 0 }  ][line width=0.75]    (10.93,-3.29) .. controls (6.95,-1.4) and (3.31,-0.3) .. (0,0) .. controls (3.31,0.3) and (6.95,1.4) .. (10.93,3.29)   ;
\draw    (222.33,382) -- (222.46,315.42) ;
\draw [shift={(222.46,313.42)}, rotate = 90.11] [color={rgb, 255:red, 0; green, 0; blue, 0 }  ][line width=0.75]    (10.93,-3.29) .. controls (6.95,-1.4) and (3.31,-0.3) .. (0,0) .. controls (3.31,0.3) and (6.95,1.4) .. (10.93,3.29)   ;
\draw    (223.46,309.42) -- (349.33,309.42) ;
\draw [shift={(351.33,309.42)}, rotate = 180] [color={rgb, 255:red, 0; green, 0; blue, 0 }  ][line width=0.75]    (10.93,-3.29) .. controls (6.95,-1.4) and (3.31,-0.3) .. (0,0) .. controls (3.31,0.3) and (6.95,1.4) .. (10.93,3.29)   ;
\draw [shift={(221.46,309.42)}, rotate = 0] [color={rgb, 255:red, 0; green, 0; blue, 0 }  ][line width=0.75]    (10.93,-3.29) .. controls (6.95,-1.4) and (3.31,-0.3) .. (0,0) .. controls (3.31,0.3) and (6.95,1.4) .. (10.93,3.29)   ;
\draw    (293,381) -- (293,333) ;
\draw [shift={(293,331)}, rotate = 90] [color={rgb, 255:red, 0; green, 0; blue, 0 }  ][line width=0.75]    (10.93,-3.29) .. controls (6.95,-1.4) and (3.31,-0.3) .. (0,0) .. controls (3.31,0.3) and (6.95,1.4) .. (10.93,3.29)   ;

\draw  [draw opacity=0][fill={rgb, 255:red, 0; green, 0; blue, 214 }  ,fill opacity=1 ] (294,359.17) -- (386.48,359.17) -- (386.48,381) -- (294,381) -- cycle ;
\draw  [draw opacity=0][fill={rgb, 255:red, 0; green, 0; blue, 214 }  ,fill opacity=1 ] (109,359.17) -- (201.48,359.17) -- (201.48,381) -- (109,381) -- cycle ;
\draw  [draw opacity=0][fill={rgb, 255:red, 0; green, 0; blue, 214 }  ,fill opacity=1 ] (458,359.17) -- (550.48,359.17) -- (550.48,381) -- (458,381) -- cycle ;

\draw  [draw opacity=0][fill={rgb, 255:red, 255; green, 189; blue, 0 }  ,fill opacity=1 ] (386.48,359.17) -- (456.48,359.17) -- (456.48,382) -- (386.48,382) -- cycle ;
\draw  [draw opacity=0][fill={rgb, 255:red, 255; green, 189; blue, 0 }  ,fill opacity=1 ] (223,359.17) -- (293,359.17) -- (293,382) -- (223,382) -- cycle ;

\draw    (457,382) -- (457,334) ;
\draw [shift={(457,332)}, rotate = 90] [color={rgb, 255:red, 0; green, 0; blue, 0 }  ][line width=0.75]    (10.93,-3.29) .. controls (6.95,-1.4) and (3.31,-0.3) .. (0,0) .. controls (3.31,0.3) and (6.95,1.4) .. (10.93,3.29)   ;
\draw    (351.33,382) -- (351.46,315.42) ;
\draw [shift={(351.46,313.42)}, rotate = 90.11] [color={rgb, 255:red, 0; green, 0; blue, 0 }  ][line width=0.75]    (10.93,-3.29) .. controls (6.95,-1.4) and (3.31,-0.3) .. (0,0) .. controls (3.31,0.3) and (6.95,1.4) .. (10.93,3.29)   ;
\draw   (291.67,380.83) .. controls (291.66,385.5) and (293.98,387.84) .. (298.65,387.85) -- (327.81,387.9) .. controls (334.48,387.91) and (337.81,390.24) .. (337.8,394.91) .. controls (337.81,390.24) and (341.14,387.92) .. (347.81,387.93)(344.81,387.93) -- (377.76,387.98) .. controls (382.43,387.99) and (384.76,385.67) .. (384.77,381) ;
\draw   (221.77,380.99) .. controls (221.75,385.66) and (224.07,388) .. (228.74,388.03) -- (246.45,388.11) .. controls (253.12,388.14) and (256.44,390.49) .. (256.41,395.16) .. controls (256.44,390.49) and (259.78,388.18) .. (266.45,388.21)(263.45,388.2) -- (284.63,388.3) .. controls (289.3,388.33) and (291.64,386.01) .. (291.67,381.34) ;
\draw   (107.67,383.83) .. controls (107.66,388.5) and (109.98,390.84) .. (114.65,390.85) -- (143.81,390.9) .. controls (150.48,390.91) and (153.81,393.24) .. (153.8,397.91) .. controls (153.81,393.24) and (157.14,390.92) .. (163.81,390.93)(160.81,390.93) -- (193.76,390.98) .. controls (198.43,390.99) and (200.76,395) .. (200.77,384) ;
\draw   (454.67,382.83) .. controls (454.66,387.5) and (456.98,389.84) .. (461.65,389.85) -- (490.81,389.9) .. controls (497.48,389.91) and (500.81,392.24) .. (500.8,396.91) .. controls (500.81,392.24) and (504.14,389.92) .. (510.81,389.93)(507.81,389.93) -- (540.76,389.98) .. controls (545.43,389.99) and (547.76,387.67) .. (547.77,383) ;
\draw   (384.77,382.99) .. controls (384.75,387.66) and (387.07,390) .. (391.74,390.03) -- (409.45,390.11) .. controls (416.12,390.14) and (419.44,392.49) .. (419.41,397.16) .. controls (419.44,392.49) and (422.78,390.18) .. (429.45,390.21)(426.45,390.2) -- (447.63,390.3) .. controls (452.3,390.33) and (454.64,388.01) .. (454.67,383.34) ;

\draw (649.33,391) node [anchor=north west][inner sep=0.75pt]  [font=\huge] [align=left] {t};
\draw (270,280) node [anchor=north west][inner sep=0.75pt]  [font=\huge]  {$\lambda$};
\draw (200,295) node [anchor=north west][inner sep=0.75pt]  [font=\huge]  {$t_{i}$};
\draw (353,295) node [anchor=north west][inner sep=0.75pt]  [font=\huge]  {$t_{i+1}$};
\draw (295,315) node [anchor=north west][inner sep=0.75pt]  [font=\huge]  {$ft_{i}$};
\draw (456,315) node [anchor=north west][inner sep=0.75pt]  [font=\huge]  {$ft_{i+1}$};
\draw (130.67,395) node [anchor=north west][inner sep=0.75pt]  [font=\huge]  {$T_{P_{i}}^{M}$};
\draw (313.67,395) node [anchor=north west][inner sep=0.75pt]  [font=\huge]  {$T_{P_{i+1}}^{M}$};
\draw (234.67,395) node [anchor=north west][inner sep=0.75pt]  [font=\huge]  {$T_{R_{i}}^{M}$};
\draw (476.67,395) node [anchor=north west][inner sep=0.75pt]  [font=\huge]  {$T_{P_{i+2}}^{M}$};
\draw (397.67,395) node [anchor=north west][inner sep=0.75pt]  [font=\huge]  {$T_{R_{i+1}}^{M}$};

\end{tikzpicture}

%% file: Figures/time03.tex
\tikzset{every picture/.style={line width=0.75pt}} 

\begin{tikzpicture}[x=0.75pt,y=0.75pt,yscale=-1,xscale=1]

\draw    (58.29,1055.33) -- (656,1055.33) ;
\draw [shift={(658,1055.33)}, rotate = 180] [color={rgb, 255:red, 0; green, 0; blue, 0 }  ][line width=0.75]    (10.93,-3.29) .. controls (6.95,-1.4) and (3.31,-0.3) .. (0,0) .. controls (3.31,0.3) and (6.95,1.4) .. (10.93,3.29)   ;
\draw    (439,1055.33) -- (439.13,988.76) ;
\draw [shift={(439.13,986.76)}, rotate = 90.11] [color={rgb, 255:red, 0; green, 0; blue, 0 }  ][line width=0.75]    (10.93,-3.29) .. controls (6.95,-1.4) and (3.31,-0.3) .. (0,0) .. controls (3.31,0.3) and (6.95,1.4) .. (10.93,3.29)   ;
\draw    (219,1055.33) -- (219.13,988.76) ;
\draw [shift={(219.13,986.76)}, rotate = 90.11] [color={rgb, 255:red, 0; green, 0; blue, 0 }  ][line width=0.75]    (10.93,-3.29) .. controls (6.95,-1.4) and (3.31,-0.3) .. (0,0) .. controls (3.31,0.3) and (6.95,1.4) .. (10.93,3.29)   ;
\draw   (218.9,1056.16) .. controls (218.91,1060.83) and (221.25,1063.15) .. (225.92,1063.14) -- (237.81,1063.11) .. controls (244.48,1063.09) and (247.82,1065.41) .. (247.83,1070.08) .. controls (247.82,1065.41) and (251.14,1063.07) .. (257.81,1063.05)(254.81,1063.06) -- (270.02,1063.02) .. controls (274.69,1063.01) and (277.01,1060.67) .. (277,1056) ;
\draw    (220.13,985.76) -- (256,985.76) -- (438.13,985.76) ;
\draw [shift={(440.13,985.76)}, rotate = 180] [color={rgb, 255:red, 0; green, 0; blue, 0 }  ][line width=0.75]    (10.93,-3.29) .. controls (6.95,-1.4) and (3.31,-0.3) .. (0,0) .. controls (3.31,0.3) and (6.95,1.4) .. (10.93,3.29)   ;
\draw [shift={(218.13,985.76)}, rotate = 0] [color={rgb, 255:red, 0; green, 0; blue, 0 }  ][line width=0.75]    (10.93,-3.29) .. controls (6.95,-1.4) and (3.31,-0.3) .. (0,0) .. controls (3.31,0.3) and (6.95,1.4) .. (10.93,3.29)   ;
\draw  [color={rgb, 255:red, 68; green, 114; blue, 196 }  ,draw opacity=1 ][fill={rgb, 255:red, 68; green, 114; blue, 196 }  ,fill opacity=1 ] (220,1033) -- (278,1033) -- (278,1054.33) -- (220,1054.33) -- cycle ;

\draw    (369,1055.33) -- (369,1007.33) ;
\draw [shift={(369,1005.33)}, rotate = 90] [color={rgb, 255:red, 0; green, 0; blue, 0 }  ][line width=0.75]    (10.93,-3.29) .. controls (6.95,-1.4) and (3.31,-0.3) .. (0,0) .. controls (3.31,0.3) and (6.95,1.4) .. (10.93,3.29)   ;
\draw   (277,1055.5) .. controls (276.97,1060.17) and (279.29,1062.51) .. (283.96,1062.54) -- (312.08,1062.69) .. controls (318.75,1062.72) and (322.07,1065.07) .. (322.04,1069.74) .. controls (322.07,1065.07) and (325.41,1062.76) .. (332.08,1062.8)(329.08,1062.78) -- (360.96,1062.95) .. controls (365.63,1062.98) and (367.97,1060.66) .. (368,1055.99) ;
\draw  [color={rgb, 255:red, 68; green, 114; blue, 196 }  ,draw opacity=1 ][fill={rgb, 255:red, 68; green, 114; blue, 196 }  ,fill opacity=1 ] (440,1033) -- (498,1033) -- (498,1054.33) -- (440,1054.33) -- cycle ;

\draw  [draw opacity=0][fill={rgb, 255:red, 249; green, 225; blue, 141 }  ,fill opacity=1 ] (277.62,1033) -- (369,1033) -- (369,1055.33) -- (277.62,1055.33) -- cycle ;
\draw  [draw opacity=0][fill={rgb, 255:red, 249; green, 225; blue, 141 }  ,fill opacity=1 ] (497.62,1033) -- (589,1033) -- (589,1055.33) -- (497.62,1055.33) -- cycle ;

\draw    (589,1056.33) -- (589,1008.33) ;
\draw [shift={(589,1006.33)}, rotate = 90] [color={rgb, 255:red, 0; green, 0; blue, 0 }  ][line width=0.75]    (10.93,-3.29) .. controls (6.95,-1.4) and (3.31,-0.3) .. (0,0) .. controls (3.31,0.3) and (6.95,1.4) .. (10.93,3.29)   ;
\draw   (438.9,1055.16) .. controls (438.91,1059.83) and (441.25,1062.15) .. (445.92,1062.14) -- (457.81,1062.11) .. controls (464.48,1062.09) and (467.82,1064.41) .. (467.83,1069.08) .. controls (467.82,1064.41) and (471.14,1062.07) .. (477.81,1062.05)(474.81,1062.06) -- (490.02,1062.02) .. controls (494.69,1062.01) and (497.01,1059.67) .. (497,1055) ;
\draw   (497,1054.5) .. controls (496.97,1059.17) and (499.29,1061.51) .. (503.96,1061.54) -- (532.08,1061.69) .. controls (538.75,1061.72) and (542.07,1064.07) .. (542.04,1068.74) .. controls (542.07,1064.07) and (545.41,1061.76) .. (552.08,1061.8)(549.08,1061.78) -- (580.96,1061.95) .. controls (585.63,1061.98) and (587.97,1059.66) .. (588,1054.99) ;

\draw (644,1060) node [anchor=north west][inner sep=0.75pt]  [font=\huge] [align=left] {t};
\draw (200,960) node [anchor=north west][inner sep=0.75pt]  [font=\huge]  {$t_{i}$};
\draw (439,960) node [anchor=north west][inner sep=0.75pt]  [font=\huge]  {$t_{i+1}$};
\draw (220,1071.33) node [anchor=north west][inner sep=0.75pt]  [font=\huge]  {$T_{P_{i}}^{B}$};
\draw (296,1071.33) node [anchor=north west][inner sep=0.75pt]  [font=\huge]  {$T_{R_{i}}^{B}$};
\draw (439,1071.33) node [anchor=north west][inner sep=0.75pt]  [font=\huge]  {$T_{P_{i+1}}^{B}$};
\draw (517,1071.33) node [anchor=north west][inner sep=0.75pt]  [font=\huge]  {$T_{R_{i+1}}^{B}$};
\draw (316,955) node [anchor=north west][inner sep=0.75pt]  [font=\huge]  {$\lambda$};
\draw (368,984.33) node [anchor=north west][inner sep=0.75pt]  [font=\huge]  {$ft_{i}$};
\draw (588,984.33) node [anchor=north west][inner sep=0.75pt]  [font=\huge]  {$ft_{i+1}$};

\end{tikzpicture}

%% file: Figures/time04.tex
\tikzset{every picture/.style={line width=0.75pt}} 

\begin{tikzpicture}[x=0.75pt,y=0.75pt,yscale=-1,xscale=1]

\draw    (59.67,1271) -- (657.38,1271) ;
\draw [shift={(659.38,1271)}, rotate = 180] [color={rgb, 255:red, 0; green, 0; blue, 0 }  ][line width=0.75]    (10.93,-3.29) .. controls (6.95,-1.4) and (3.31,-0.3) .. (0,0) .. controls (3.31,0.3) and (6.95,1.4) .. (10.93,3.29)   ;
\draw    (220.33,1271) -- (220.46,1204.42) ;
\draw [shift={(220.46,1202.42)}, rotate = 90.11] [color={rgb, 255:red, 0; green, 0; blue, 0 }  ][line width=0.75]    (10.93,-3.29) .. controls (6.95,-1.4) and (3.31,-0.3) .. (0,0) .. controls (3.31,0.3) and (6.95,1.4) .. (10.93,3.29)   ;
\draw    (221.46,1198.42) -- (347.33,1198.42) ;
\draw [shift={(349.33,1198.42)}, rotate = 180] [color={rgb, 255:red, 0; green, 0; blue, 0 }  ][line width=0.75]    (10.93,-3.29) .. controls (6.95,-1.4) and (3.31,-0.3) .. (0,0) .. controls (3.31,0.3) and (6.95,1.4) .. (10.93,3.29)   ;
\draw [shift={(219.46,1198.42)}, rotate = 0] [color={rgb, 255:red, 0; green, 0; blue, 0 }  ][line width=0.75]    (10.93,-3.29) .. controls (6.95,-1.4) and (3.31,-0.3) .. (0,0) .. controls (3.31,0.3) and (6.95,1.4) .. (10.93,3.29)   ;
\draw  [color={rgb, 255:red, 68; green, 114; blue, 196 }  ,draw opacity=1 ][fill={rgb, 255:red, 68; green, 114; blue, 196 }  ,fill opacity=1 ] (220.67,1249.33) -- (278.67,1249.33) -- (278.67,1270.67) -- (220.67,1270.67) -- cycle ;
\draw  [color={rgb, 255:red, 68; green, 114; blue, 196 }  ,draw opacity=1 ][fill={rgb, 255:red, 68; green, 114; blue, 196 }  ,fill opacity=1 ] (370.33,1249.33) -- (428.33,1249.33) -- (428.33,1270.67) -- (370.33,1270.67) -- cycle ;
\draw  [draw opacity=0][fill={rgb, 255:red, 249; green, 225; blue, 141 }  ,fill opacity=1 ] (427.95,1249.33) -- (519.33,1249.33) -- (519.33,1271.67) -- (427.95,1271.67) -- cycle ;
\draw  [draw opacity=0][fill={rgb, 255:red, 249; green, 225; blue, 141 }  ,fill opacity=1 ] (278.28,1249.33) -- (369.67,1249.33) -- (369.67,1271.67) -- (278.28,1271.67) -- cycle ;

\draw    (369.67,1271.67) -- (369.67,1223.67) ;
\draw [shift={(369.67,1221.67)}, rotate = 90] [color={rgb, 255:red, 0; green, 0; blue, 0 }  ][line width=0.75]    (10.93,-3.29) .. controls (6.95,-1.4) and (3.31,-0.3) .. (0,0) .. controls (3.31,0.3) and (6.95,1.4) .. (10.93,3.29)   ;
\draw    (349.33,1271) -- (349.46,1204.42) ;
\draw [shift={(349.46,1202.42)}, rotate = 90.11] [color={rgb, 255:red, 0; green, 0; blue, 0 }  ][line width=0.75]    (10.93,-3.29) .. controls (6.95,-1.4) and (3.31,-0.3) .. (0,0) .. controls (3.31,0.3) and (6.95,1.4) .. (10.93,3.29)   ;

\draw    (519.67,1270.67) -- (519.67,1222.67) ;
\draw [shift={(519.67,1220.67)}, rotate = 90] [color={rgb, 255:red, 0; green, 0; blue, 0 }  ][line width=0.75]    (10.93,-3.29) .. controls (6.95,-1.4) and (3.31,-0.3) .. (0,0) .. controls (3.31,0.3) and (6.95,1.4) .. (10.93,3.29)   ;
\draw   (220.57,1271.16) .. controls (220.58,1275.83) and (222.92,1278.15) .. (227.59,1278.14) -- (239.48,1278.11) .. controls (246.15,1278.09) and (249.49,1280.41) .. (249.5,1285.08) .. controls (249.49,1280.41) and (252.81,1278.07) .. (259.48,1278.05)(256.48,1278.06) -- (271.69,1278.02) .. controls (276.36,1278.01) and (278.68,1275.67) .. (278.67,1271) ;
\draw   (278.67,1270.5) .. controls (278.64,1275.17) and (280.96,1277.51) .. (285.63,1277.54) -- (313.75,1277.69) .. controls (320.42,1277.72) and (323.74,1280.07) .. (323.71,1284.74) .. controls (323.74,1280.07) and (327.08,1277.76) .. (333.75,1277.8)(330.75,1277.78) -- (362.63,1277.95) .. controls (367.3,1277.98) and (369.64,1275.66) .. (369.67,1270.99) ;
\draw   (369.57,1270.16) .. controls (369.58,1274.83) and (371.92,1277.15) .. (376.59,1277.14) -- (388.48,1277.11) .. controls (395.15,1277.09) and (398.49,1279.41) .. (398.5,1284.08) .. controls (398.49,1279.41) and (401.81,1277.07) .. (408.48,1277.05)(405.48,1277.06) -- (420.69,1277.02) .. controls (425.36,1277.01) and (427.68,1274.67) .. (427.67,1270) ;
\draw   (427.67,1269.5) .. controls (427.64,1274.17) and (429.96,1276.51) .. (434.63,1276.54) -- (462.75,1276.69) .. controls (469.42,1276.72) and (472.74,1279.07) .. (472.71,1283.74) .. controls (472.74,1279.07) and (476.08,1276.76) .. (482.75,1276.8)(479.75,1276.78) -- (511.63,1276.95) .. controls (516.3,1276.98) and (518.64,1274.66) .. (518.67,1269.99) ;

\draw (645.33,1275) node [anchor=north west][inner sep=0.75pt]  [font=\huge] [align=left] {t};
\draw (268,1170) node [anchor=north west][inner sep=0.75pt]  [font=\huge]  {$\lambda$};
\draw (200,1170.33) node [anchor=north west][inner sep=0.75pt]  [font=\huge]  {$t_{i}$};
\draw (352,1167.33) node [anchor=north west][inner sep=0.75pt]  [font=\huge]  {$t_{i+1}$};
\draw (368.67,1200.67) node [anchor=north west][inner sep=0.75pt]  [font=\huge]  {$ft_{i}$};
\draw (519.67,1201.67) node [anchor=north west][inner sep=0.75pt]  [font=\huge]  {$ft_{i+1}$};
\draw (226.67,1280) node [anchor=north west][inner sep=0.75pt]  [font=\huge]  {$T_{P_{i}}^{B}$};
\draw (302.67,1280) node [anchor=north west][inner sep=0.75pt]  [font=\huge]  {$T_{R_{i}}^{B}$};
\draw (375.67,1280) node [anchor=north west][inner sep=0.75pt]  [font=\huge]  {$T_{P_{i+1}}^{B}$};
\draw (452.67,1280) node [anchor=north west][inner sep=0.75pt]  [font=\huge]  {$T_{R_{i+1}}^{B}$};

\end{tikzpicture}